\documentclass[11pt]{article}
\usepackage{a4wide}

\usepackage{complexity}
\usepackage{graphicx}
\usepackage{framed}
\usepackage{amsmath}
\usepackage{amsthm}
\usepackage{amsfonts}
\usepackage{mathtools}
\usepackage{cite}
\usepackage{xspace}

\usepackage{amsmath}
\usepackage{amssymb}
\usepackage{graphicx}
\usepackage{url}
\usepackage{comment}
\pagestyle{plain}
\usepackage{enumitem}

\usepackage{amsfonts}
\usepackage{amsthm}
\usepackage{comment}
\usepackage{thmtools}
\usepackage{thm-restate}
\usepackage{mdframed}
\usepackage{booktabs}

\newtheorem{theorem}{Theorem}[]
\newtheorem{lemma}[theorem]{Lemma}
\newtheorem{corollary}[theorem]{Corollary}

\newtheorem{proposition}[theorem]{Proposition}

\usepackage[noend,linesnumbered,ruled,vlined]{algorithm2e}
\DontPrintSemicolon
\SetArgSty{textup}

\usepackage{xcolor}
\usepackage{colortbl}
\colorlet{tableheadcolor}{gray!25} 
\colorlet{tablerowcolor}{gray!20} 
 %

\usepackage{graphicx}
\usepackage{booktabs}
\usepackage{pdflscape}
\usepackage{mdframed}
\usepackage{subfig}
\usepackage{mathtools}

\DeclareMathOperator{\tw}{tw}

\renewcommand{\P}{\textsc{P}}
\renewcommand{\NP}{\textsc{NP}}

\usepackage{xspace}
\usepackage{framed}

\newcommand{\ProblemFormat}[1]{{\sc #1}}
\newcommand{\ProblemName}[1]{\ProblemFormat{#1}\xspace}

\DeclareMathOperator{\adj}{E}
\DeclareMathOperator{\colnei}{NeiCols}

\newcommand{\SB}{\{\,} 
\newcommand{\SM}{\;{|}\;} 
\renewcommand{\SE}{\,\}}


\newcommand{\probMonotoneNaeThreeSAT}{\ProblemName{Monotone Not-All-Equal $3$-SAT}}

\newcommand{\probkCol}{\ProblemName{$k$-Coloring}}

\newcommand{\probPlanarThreeCol}{\ProblemName{Planar $3$-Coloring}}

\newcommand{\probtEDkCOL}{\ProblemName{$d$-Exact Defective $k$-Coloring}}
\newcommand{\probTwoEDkCOL}{\ProblemName{$2$-Exact Defective $k$-Coloring}}
\newcommand{\probTwoEDTwoCOL}{\ProblemName{$2$-Exact Defective $2$-Coloring}}
\newcommand{\probOneEDTwoCOL}{\ProblemName{$1$-Exact Defective $2$-Coloring}}
\newcommand{\probtEDTwoCOL}{\ProblemName{$d$-Exact Defective $2$-Coloring}}
\newcommand{\probtEDThreeCOL}{\ProblemName{$d$-Exact Defective $3$-Coloring}}
\newcommand{\probtplustwoEDTwoCOL}{\ProblemName{$(d+2)$-Exact Defective $2$-Coloring}}

\newcommand{\probCompleteCover}{\ProblemName{$K_r$-factor}}

\SetKwInput{KwInput}{Input}
\SetKwInput{KwOutput}{Output}
\SetKw{Continue}{continue}

\makeatletter
\g@addto@macro\@floatboxreset\centering
\makeatother

\newcount\bsubfloatcount
\newtoks\bsubfloattoks
\newdimen\bsubfloatht

\makeatletter
\newcommand{\bsubfloat}[2][]{%
  \sbox\z@{#2}%
  \ifdim\bsubfloatht<\ht\z@
    \bsubfloatht=\ht\z@
  \fi
  \advance\bsubfloatcount\@ne
  \@namedef{bsubfloat\romannumeral\bsubfloatcount}{%
    \subfloat[#1]{\vbox to\bsubfloatht{\hbox{#2}\vfill}}}%
}
\newcommand{\resetbsubfloat}{\bsubfloatcount\z@\bsubfloatht=\z@}
\makeatother

\title{Exact defective colorings of graphs}

\author{James Cumberbatch\thanks{Department of Mathematics, Purdue University, USA} \and Juho Lauri\thanks{Helsinki, Finland} \and Christodoulos Mitillos\thanks{Department of Mathematics and Statistics, University of Cyprus, Cyprus}}

\begin{document}
\maketitle

\begin{abstract}
An \emph{exact $(k,d)$-coloring} of a graph $G$ is a coloring of its vertices with $k$ colors such that each vertex $v$ is adjacent to exactly $d$ vertices having the same color as $v$.
The \emph{exact $d$-defective chromatic number}, denoted $\chi_d^=(G)$, is the minimum $k$ such that there exists an exact $(k,d)$-coloring of $G$.
In an exact $(k,d)$-coloring, which for $d=0$ corresponds to a proper coloring, each color class induces a $d$-regular subgraph.
We give basic properties for the parameter and determine its exact value for cycles, trees, and complete graphs. 
In addition, we establish bounds on $\chi_d^=(G)$ for all relevant values of $d$ when $G$ is planar, chordal, or has bounded treewidth.
We also give polynomial-time algorithms for finding certain types of exact $(k,d)$-colorings in cactus graphs and block graphs.
Our main result is on the computational complexity of \probtEDkCOL in which we are given a graph $G$ and asked to decide whether $\chi_d^=(G) \leq k$.
Specifically, we prove that the problem is $\NP$-complete for all $d \geq 1$ and $k \geq 2$.
\end{abstract}

\section{Introduction}
Let $G=(V,E)$ be an undirected (simple) graph. 
An \emph{exact $(k,d)$-coloring} of $G$ is a coloring of its vertices such that each vertex $v$ has exactly $d$ neighbors having the same color as $v$.
We define \emph{the exact $d$-defective chromatic number}, denoted $\chi_d^=(G)$, as the minimum $k$ such that there exists an exact $(k,d)$-coloring of~$G$.
It is not difficult to see that for particular values of $k$ and $d$, it can be that a graph $G$ does not admit an exact $(k,d)$-coloring.
For example, when $G$ is an odd cycle, an exact $(k,1)$-coloring does not exist for any positive $k$.
In such cases, we write $\chi^=_d(G) = \infty$.
Clearly, an exact $(k,0)$-coloring is a proper $k$-coloring, and thus $\chi_0^=(G) = \chi(G)$, where $\chi(G)$ is the chromatic number of $G$.
Further, our definitions are closely related to a \emph{$(k,d)$-coloring} in which one requires \emph{at most} $d$ neighbors of the same color as the vertex $v$. 
This, in turn, gives rise to the \emph{$d$-defective chromatic number} of a graph $G$, denoted $\chi_d(G)$, which is the minimum $k$ such that there exists a $(k,d)$-coloring of~$G$.
As such, it is also clear more generally that $\chi_0^=(G) = \chi_0(G) = \chi(G)$.
Finally, one can readily observe that in the case of an exact $(k,d)$-coloring, the color classes induce $d$-regular subgraphs, whereas in the case of a $(k,d)$-coloring, the color classes induce subgraphs of maximum degree $d$.

Defective colorings (i.e., $(k,d)$-colorings) were introduced nearly simultaneously by Burr and Jacobson~\cite{Andrews1985}, Harary and Jones~\cite{Harary1985}, and Cowen~{et al.}~\cite{Cowen1986}.
As a main result, the latter set of authors gave a complete characterization of all $k$ and $d$ such that every planar or outerplanar graph is $(k,d)$-colorable.
For general graphs, Cowen~{et al.}~\cite{Cowen1997} showed that by applying a result of Lov{\'a}sz~\cite{Lovasz1966} a greedy procedure establishes that $\chi_d(G) \leq \lfloor \Delta / (d+1) \rfloor + 1$ for any $d$.
Further, the authors proved that for all $k \geq 3$ and $d \geq 2$, it is $\NP$-complete to decide whether a given graph admits a $(k,d)$-coloring.
In addition, they showed that deciding whether a given planar graph has a $(3,1)$-coloring remains $\NP$-complete, and also that for each $d \geq 1$ it is $\NP$-complete to decide if a planar graph is $(2,d)$-colorable.
For general results, we refer the interested reader to the survey of Frick~\cite{Frick1993}.
For other algorithmic results on defective coloring, see Belmonte~{et al.}~\cite{Belmonte2017, Belmonte2020}.

While not under the name we introduce here, exact $(k,d)$-colorings have been studied before but considerably less heavily than $(k,d)$-colorings.
In particular, in his seminal work from 1978, Schaefer~\cite{Schaefer1978} proved that it is $\NP$-complete to decide whether the vertices of a given planar cubic graph can be 2-colored such that each vertex has exactly one neighbor of the same color as itself.
Equivalently, this result states that it is $\NP$-complete to decide whether a given planar cubic graph admits an exact $(2,1)$-coloring.
Finally, Yuan and Wang~\cite{Yuan2003} studied $\chi_1^=(G)$ under the name of induced matching partition number, gave an upper bound of $2 \Delta - 1$ for it, and characterized the graphs achieving this bound.
Besides these, we are not aware of any other results on exact $(k,d)$-colorings.
Furthermore, Gera~{et al.}~\cite{Gera2018} drew attention to the problem by writing that it is one of six vertex-partitioning problems that has ``received little or no attention''.

We find it interesting to contrast the result of Cowen~{et al.}~\cite{Cowen1997}, that deciding whether a graph admits a $(2,1)$-coloring is $\NP$-complete, with the usual notion of proper 2-coloring.
As is well-known, proper 2-colorability can be determined with a simple graph search and is thus solvable in polynomial time.
In other words, from a complexity-theoretic perspective, partitioning the vertices of a graph into independent sets can be considerably easier than partitioning them into subgraphs of bounded maximum degree. 
As such, it is natural to ask: how different is it to consider a partition into $d$-regular subgraphs?

\paragraph{Our results}
We initiate the systematic study of exact defective colorings.
After stating key definitions and existing results in Section~\ref{sec:prelims}, we continue to give the following results.

\begin{itemize}
\item In Section~\ref{sec:basic}, we look at some basic results of exact defective colorings. These include characterizations of such colorings for cycles, wheel graphs, trees, and complete graphs.

\item In Section~\ref{sec:general-results}, we consider how exact defective colorings relate to certain structural properties of graphs. We establish results on the presence of exact defective colorings for induced minor closed graphs, including planar graphs, outerplanar graphs, graphs of bounded treewidth, and chordal graphs. In addition, we prove that the exact $d$-defective chromatic number is incomparable with the chromatic number of a graph. 

\item In Section~\ref{sec:complexity}, we establish complexity results for \probtEDkCOL (see Section~\ref{sec:prelims} for a definition) for all parameter values of $k$ and $d$. Specifically, we show that this problem is $\NP$-complete for all $d \geq 1$ when $k \geq 2$.
Observing that for $k = 1$ the problem is trivially polynomial to solve, we fully characterise the complexity of this family of problems.
Additionally, for planar graphs, we show that \probtEDThreeCOL remains $\NP$-complete for every valid value of $d$.

\item In Section~\ref{sec:algos}, we show that \probtEDkCOL is \emph{fixed-parameter tractable}\footnote{A problem is said to be fixed-parameter tractable with respect to a parameter $k$ if it can be solved in time $f(k) \cdot n^{O(1)}$, where $f$ is some computable function depending only on $k$ and $n$ the input size.} (FPT) when parameterized by treewidth. We then give a combinatorial polynomial-time algorithm for solving \probTwoEDkCOL for cactus graphs. Finally, we show that a linear-time algorithm exists for solving \probtEDkCOL on block graphs.
\end{itemize}

\section{Preliminaries}
\label{sec:prelims}
For a positive integer $n$, we write $[n] = \{1,2,\ldots,n\}$.

In this section, we define the graph-theoretic concepts most central to our work.
For graph-theoretic notation not defined here, we refer the reader to~\cite{Diestel2010}.
We also briefly introduce decision problems our hardness results depend on.

\paragraph{Graph parameters and classes} 
All graphs we consider are undirected and simple.
For a graph $G$, we denote by $V(G)$ and $E(G)$ its vertex set and edge set, respectively.
To reduce clutter, an edge $\{u,v\}$ is often denoted as $uv$. 
Two vertices $x$ and $y$ are \emph{adjacent} (or \emph{neighbors}) if $xy$ is an edge of $G$. The \emph{neighborhood} of a vertex $v$, denoted by $N(v)$, is the set of all vertices adjacent to $v$. 

A \emph{vertex-coloring} (or simply \emph{coloring}) is a function $c : V \to [k]$ assigning a color from $[k]$ to each vertex of a graph $G=(V,E)$. 
The coloring is said to be \emph{proper} if $c(u) \neq c(v)$ for every $uv \in E$. 
A graph $G$ is said to be \emph{$k$-colorable} if there exists a proper vertex-coloring using $k$ colors for it. 
The minimum $k$ for which a graph $G$ is $k$-colorable is known as its \emph{chromatic number}, denoted by $\chi(G)$.
In particular, a 2-colorable graph is \emph{bipartite}.

A graph is \emph{planar} if it can be embedded in the plane with no crossing edges. A graph is \emph{outerplanar} if it has a crossing-free embedding in the plane such that all vertices are on the same face. 
In a \emph{cactus graph}, every maximal biconnected component, known as a \emph{block}, is a cycle or a $K_2$.
Cactus graphs form a subclass of outerplanar graphs.

Finally, we mention the following well-known structural measure for ``tree-likeness'' of graphs.
A \emph{tree decomposition} of $G$ is a pair 
$(T,\{X_i : i\in I\})$
where $X_i \subseteq V$, $i\in I$, and $T$ is a tree with elements
of $I$ as nodes
such that:
\begin{enumerate}
\item for each edge $uv\in E$, there is an $i\in I$ such that $\{u,v\} 
\subseteq X_i$, and
\item for each vertex $v\in V$, $T[\SB i\in I \SM v\in X_i \SE]$ is a tree with at least one node.
\end{enumerate}
The \emph{width} of a tree decomposition is $\max_{i \in I} |X_i|-1$.
The \emph{treewidth} of $G$, denoted by $\tw(G)$, is the minimum width taken over all tree decompositions of $G$. 
It can be noted that outerplanar graphs, and consequently cactus graphs, have treewidth at most two.

\paragraph{Decision problems}
For completeness, we define some of the computational problems relevant to our results here.
In \probkCol, we are given a graph $G$ and the goal is to decide whether or not $G$ admits a $k$-coloring.
This problem is well-known to be $\NP$-complete for every $k \geq 3$.
Further, for $k = 3$, the problem remains $\NP$-complete when restricted to 4-regular planar graphs~\cite{Dailey1980}.

Our main focus is on \probtEDkCOL, where we are given a graph $G$ and the goal is to decide whether $\chi_d^=(G) \leq k$, i.e., whether the vertices of $G$ can be colored in $k$ colors such that each vertex $v$ of $G$ has exactly $d$ neighbors having the same color as $v$.
In other words, the goal is to decide whether the vertices of $G$ can be partitioned into at most $k$ color classes such that each class induces a $d$-regular subgraph.

\paragraph{Monadic second order logic}
Let us denote individual variables by lowercase letters $x$, $y$, $z$ and set variables by uppercase letters $X$, $Y$, $Z$. 
\emph{Formulas} of MSO$_2$ logic are constructed from atomic
formulas $I(x,y)$, $x\in X$, and $x = y$ using the connectives $\neg$
(negation), $\wedge$ (conjunction) and existential quantification
$\exists x$ over individual variables or existential
quantification $\exists X$ over set variables. Individual variables
range over vertices and edges, and set variables range either over sets of
vertices or over sets of edges. The atomic formula $I(x,y)$ expresses that vertex $x$ is incident to edge $y$, $x = y$
expresses equality, and $x\in X$ expresses that $x$ is in the set
$X$. From this, we define the semantics of MSO$_2$ logic
in the standard way.

MSO$_1$ logic is defined similarly as MSO$_2$ logic, with the following distinctions. Individual variables range only over vertices, and set variables only range over sets of vertices. The atomic formula $I(x,y)$ is replaced by $E(x,y)$, which expresses that vertex $x$ is adjacent to vertex $y$.
In other words, MSO$_1$ is a weaker logic which forbids quantification over edge subsets.

\emph{Free and bound variables} of a formula are defined in the usual way. A
\emph{sentence} is a formula without free variables. 
It is well-known that MSO$_2$ formulas can be checked efficiently on graphs of bounded treewidth.

\begin{theorem}[Courcelle~\cite{Courcelle90a}]
\label{thm:msotreewidth}
Let $\phi$ be a fixed \emph{MSO}$_2$ sentence and $p$ be a positive constant. Given an $n$-vertex graph $G$ of treewidth at most $p$, it is possible to decide whether $G\models \phi$ in time $O(n)$.
\end{theorem}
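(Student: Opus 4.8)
The plan is to exploit the classical correspondence between monadic second-order logic and finite tree automata, following the automata-theoretic route to Courcelle's theorem. At a high level, a graph of bounded treewidth can be encoded as a labelled tree over a finite alphabet, every fixed MSO$_2$ sentence can be translated into a finite bottom-up tree automaton accepting exactly the encodings of its models, and such an automaton can be run in time linear in the size of the tree.

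First I would reduce to a normalized input. Since $p$ is a fixed constant, a tree decomposition of width at most $p$ can be computed in linear time (e.g.\ by Bodlaender's algorithm), and it can be turned into a \emph{nice} tree decomposition with $O(n)$ nodes in which every node is a leaf, an introduce node, a forget node, or a join node. Reading this decomposition bottom-up, each node produces a \emph{boundaried graph}: the subgraph induced by the vertices forgotten below it, together with the current bag of at most $p+1$ distinguished boundary vertices. The four node types correspond to a finite set of algebraic operations on boundaried graphs (introduce a boundary vertex, forget a boundary vertex, glue two graphs along a common boundary), so the entire decomposition is exactly a term over a finite signature.

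The heart of the argument is a Feferman--Vaught-style composition theorem. Let $q$ be the quantifier rank of the fixed sentence $\phi$. The key claim is that the $q$-\emph{type} of a boundaried graph --- the set of MSO$_2$ formulas of quantifier rank at most $q$, with free variables naming the boundary, that it satisfies --- is determined by the $q$-types of the boundaried graphs it was glued from, together with the operation used to combine them. Over a boundary of bounded size there are only finitely many such $q$-types, and the composition rule is computable. This lets one compute the type at each node from the types at its children in constant time, and whether $G \models \phi$ is read off from the type at the root. Phrased in automata-theoretic terms, the finitely many types are the states of a deterministic bottom-up tree automaton whose transition function realizes the composition rule; running it on the $O(n)$-node term yields the $O(n)$ bound.

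The main obstacle is twofold. First, proving the composition theorem rigorously requires an Ehrenfeucht--Fra\"iss\'e (or syntactic normal-form) argument showing that $q$-types compose under the gluing operations, together with a bound on the number of types. Second, because the logic is MSO$_2$ rather than MSO$_1$, one must accommodate quantification over \emph{edge} sets; the standard device is to pass to the \emph{incidence} encoding, representing each edge as an auxiliary element so that edge-set quantification becomes set quantification over a structure whose treewidth is still bounded in terms of $\tw(G)$. Establishing that this encoding preserves both the truth of $\phi$ and the bounded-width condition, and then carrying the type bookkeeping through it, is the technically delicate part; everything else is routine once the finite automaton is in hand.
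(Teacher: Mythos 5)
This statement is Courcelle's theorem, which the paper does not prove: it is imported verbatim as a cited black box (used later to obtain Lemma~\ref{lem:tw-poly}), so there is no paper-internal proof to compare against. Your outline is the standard and correct route to this result: encode a width-$p$ nice tree decomposition as a term over a finite signature of boundaried-graph operations, show via a Feferman--Vaught/Ehrenfeucht--Fra\"iss\'e composition argument that quantifier-rank-$q$ types (finitely many, up to equivalence, over a bounded boundary) propagate bottom-up, realize this as a deterministic finite tree automaton, and handle MSO$_2$ edge-set quantification by passing to the incidence structure, whose treewidth remains bounded. You are also right that the theorem as stated here --- with only the graph as input --- needs a linear-time algorithm to produce the decomposition, which is Bodlaender's result rather than part of Courcelle's original argument; flagging that dependency is the one place where the statement quietly relies on a later theorem. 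As a proof plan this is sound; turning it into a full proof would of course require carrying out the composition theorem and the incidence-encoding bookkeeping that you correctly identify as the technical core.
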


In a similar spirit, MSO$_1$ formulas can be checked efficiently on graphs of bounded \emph{cliquewidth}~\cite{CourcelleMakowskyRotics00} (or, equivalently, \emph{rankwidth}~\cite{GanianHlineny10}). 
Specifically, while the formula can be checked in linear time if a suitable rank- or clique-decomposition is given, such a decomposition itself can be found in cubic time.

\begin{theorem}[Courcelle, Makowsky, and Rotics~\cite{CourcelleMakowskyRotics00}, Ganian and Hlin\v{e}n{\'y}~\cite{GanianHlineny10}]
\label{fact:msorankwidth}
Let $\phi$ be a fixed \emph{MSO}$_1$ sentence and $p$ be a positive constant. Given an $n$-vertex graph $G$ of cliquewidth at most $p$, it is possible to decide whether $G\models \phi$ in time $O(n^3)$.
\end{theorem}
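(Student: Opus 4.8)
The plan is to prove the statement via the standard two-phase \emph{decompose-then-evaluate} strategy, where the cubic running time is incurred entirely in the first phase. Since deciding cliquewidth exactly is computationally intractable, I would not attempt to compute an optimal clique-decomposition directly. Instead, I would work through the closely related parameter of rankwidth, which is functionally equivalent to cliquewidth in the sense that each bounds the other: a graph of cliquewidth at most $p$ has rankwidth bounded by a function of $p$, and conversely. The first step is therefore to invoke the algorithm of Oum and Seymour (as refined by Hlin\v{e}n\'y and Oum), which in time $O(n^3)$ either outputs a rank-decomposition of $G$ of width at most $f(p)$ for some fixed function $f$, or correctly reports that the rankwidth of $G$ exceeds that bound. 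In the latter case $G$ cannot have cliquewidth at most $p$; under the promise of the statement we therefore always obtain a bounded-width rank-decomposition, from which a clique-width expression—an $\ell$-expression using $\ell = \ell(p)$ labels—can be extracted in linear time.

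The second step is to evaluate $\phi$ on the resulting $\ell$-expression, which we regard as a rooted parse tree with $O(n)$ nodes built from the four clique-width operations: introducing a single labeled vertex, taking the disjoint union of two labeled graphs, relabelling all vertices of one label to another, and adding all edges between two label classes. The key structural fact—which is the genuine content of the Courcelle--Makowsky--Rotics theorem—is a Feferman--Vaught--style composition lemma: for each such operation, the MSO$_1$-theory up to the quantifier rank $q$ of $\phi$ of the constructed labeled graph is completely determined by the corresponding theories of its operand(s) together with the operation applied. Because there are only finitely many MSO$_1$-types of quantifier rank $q$ over a signature with $\ell$ labels, this composition yields a finite bottom-up tree automaton on the parse tree whose state at each node records the type of the subgraph generated there. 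Deciding $G \models \phi$ then reduces to checking whether the state reached at the root is accepting, a single linear-time sweep of the parse tree; this is precisely why supplying a decomposition in advance collapses the running time to $O(n)$.

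The main obstacle, and the part requiring real work, is establishing the composition lemma of the second step: that MSO$_1$-definable graph classes are \emph{recognizable} in the clique-width algebra with a finite, computable type-transition function. The proof proceeds by induction on the structure of $\phi$. The disjoint-union case is the classical Feferman--Vaught argument, but the genuine difficulty lies in the edge-addition and relabelling operations, since an existential set quantifier in $\phi$ must be handled by guessing how the quantified vertex set splits between the two operands and then verifying that the edges newly created between label classes interact correctly with the guessed split. Controlling this combinatorial blow-up—showing that only finitely many type combinations survive and that the transition function is computable from $\phi$ alone—is the crux. By contrast, the first phase is essentially a black-box invocation of the rankwidth-approximation machinery, and it is exactly the cubic cost of that approximation, rather than the automaton evaluation, that accounts for the $O(n^3)$ bound in the statement.
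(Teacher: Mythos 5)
This theorem is a cited background result (Courcelle--Makowsky--Rotics; Ganian--Hlin\v{e}n\'y) for which the paper gives no proof of its own, only the surrounding remark that the formula can be checked in linear time once a rank- or clique-decomposition is available and that such a decomposition can be found in cubic time. Your two-phase sketch --- cubic-time rankwidth approximation (Oum--Seymour / Hlin\v{e}n\'y--Oum) followed by linear-time type-automaton evaluation of the MSO$_1$ sentence on the resulting expression --- is a correct reconstruction of the standard proof and matches exactly the decomposition-then-evaluation account the paper itself gives.
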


\section{Basic properties}
\label{sec:basic}
In an exact $(k,d)$-coloring, each color class induces a $d$-regular graph. Thus, the following is an easy observation we will tacitly use later on.
\begin{proposition}
A graph $G$ does not admit an exact $(k,d)$-coloring for any $d > \delta(G)$.
\end{proposition}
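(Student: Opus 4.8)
The plan is to argue by contradiction, pinning the failure on a vertex of minimum degree. First I would suppose, for contradiction, that $G$ admits an exact $(k,d)$-coloring $c : V(G) \to [k]$ with $d > \delta(G)$. Let $v \in V(G)$ be a vertex attaining the minimum degree, so that $|N(v)| = \delta(G)$. By the definition of an exact $(k,d)$-coloring, $v$ must have \emph{exactly} $d$ neighbors $u$ with $c(u) = c(v)$.

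The key step is simply to observe that any neighbor of $v$ sharing its color is, in particular, a neighbor of $v$. Hence the set of same-colored neighbors of $v$ is a subset of $N(v)$, and its cardinality is at most $|N(v)| = \delta(G)$. But the definition demands this cardinality be exactly $d$, whence $d \leq \delta(G)$, contradicting the assumption $d > \delta(G)$. This establishes that no such coloring can exist.

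There is really no obstacle here; the statement is immediate from the fact that a vertex cannot have more same-colored neighbors than it has neighbors in total. The only point worth stating explicitly is the equivalent structural reading noted in the excerpt: since each color class induces a $d$-regular subgraph, every vertex has degree exactly $d$ within its own class, which forces $|N(v)| \geq d$ for all $v \in V(G)$ and therefore $d \leq \delta(G)$. Either phrasing yields the result in one line.
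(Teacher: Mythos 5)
Your proof is correct and matches the paper's reasoning: the paper states this as an immediate observation following from the fact that each color class induces a $d$-regular subgraph, which is precisely the structural reading you give at the end. Your contradiction argument via a minimum-degree vertex is just a spelled-out version of the same one-line fact, so there is nothing to add.
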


Let $M = \{x_1y_1, x_2y_2, \ldots, x_ky_k \}$ be a matching of a graph $G$.
We denote by $G / M$ the (simple) graph obtained from $G$ by contracting $M$, i.e., we replace each pair of vertices $x_i y_i$ for $1 \leq i \leq k$ by a new vertex $z_i$ whose neighbors are all the neighbors of $x_i$ and $y_i$.
Yuan and Wang~\cite{Yuan2003} gave the following results.
\begin{lemma}[Yuan and Wang~\cite{Yuan2003}]
\label{lem:chi1-perf}
A graph $G$ with at least one perfect matching has 
\begin{equation*}
\chi^=_1(G) = \min \{ \chi(G / M) \mid M \text{ is a perfect matching of } G\}.
\end{equation*}
\end{lemma}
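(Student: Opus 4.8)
The plan is to establish the equality via a two-way correspondence between exact $(k,1)$-colorings of $G$ and pairs consisting of a perfect matching $M$ of $G$ together with a proper coloring of the contracted graph $G/M$. The underlying observation is that in an exact $(k,1)$-coloring each color class induces a $1$-regular graph, i.e. a perfect matching on its own vertices, so the ``same-color neighbor'' assignments across all classes assemble into a perfect matching of all of $G$.

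First I would prove $\min \{ \chi(G/M) \mid M \text{ perfect matching}\} \leq \chi^=_1(G)$. Given an exact $(k,1)$-coloring $c$ with $k = \chi^=_1(G)$, each vertex $v$ has a unique neighbor $v'$ with $c(v') = c(v)$. I would verify that this relation is a fixed-point-free involution: since $v'$ also has exactly one same-color neighbor and $v$ qualifies, we must have $(v')' = v$. Hence $M := \{vv' : v \in V\}$ is a perfect matching of $G$. As both endpoints of every matched edge share a color, $c$ descends to a well-defined coloring $\bar c$ of $G/M$. The crucial step is to check that $\bar c$ is proper: if two distinct contracted vertices were adjacent in $G/M$ and received the same color, then some endpoint in $G$ would have at least two same-color neighbors, namely its matched partner and the adjacency witnessing the edge of $G/M$, contradicting exactness. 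Thus $\chi(G/M) \leq k$.

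For the reverse inequality $\chi^=_1(G) \leq \min_M \chi(G/M)$, I would fix any perfect matching $M$ and a proper coloring of $G/M$ using $\ell = \chi(G/M)$ colors, then lift it to $G$ by assigning each vertex the color of its image in $G/M$. By construction the two endpoints of each matched edge receive equal colors, so every vertex $v$ has at least its partner $v'$ as a same-color neighbor. To confirm this is the \emph{only} one, and hence that the lifted coloring is exact $(\ell,1)$, I would argue that any further same-color neighbor $w \neq v'$ of $v$ lies in a different matched pair, and therefore would produce a monochromatic edge in $G/M$, violating properness of the coloring there.

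Combining the two inequalities yields the claimed equality. Since the argument is essentially a clean structural bijection, no single step poses a genuine obstacle; the point demanding the most care is the properness verification in both directions, where exactness (exactly one same-color neighbor, rather than at most one) is precisely what forces both the descended and the lifted colorings to be proper. A minor technical remark is to respect the simple-graph convention for $G/M$, observing that the loops arising from contracted edges and any multi-edges between pairs are harmless for the coloring arguments.
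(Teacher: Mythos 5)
Your proof is correct: the fixed-point-free involution argument shows that every exact $(k,1)$-coloring yields a perfect matching whose contraction is properly colored, and the lifting argument gives the converse, which together establish the equality. The paper itself cites this lemma from Yuan and Wang without proof, but your matching--contraction correspondence is essentially the same argument the paper uses to prove its generalization (Lemma~\ref{lem:chid-perf}, where perfect matchings are replaced by partitions into $d$-regular induced subgraphs), with your version spelling out the properness verifications in both directions more explicitly.
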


\begin{theorem}[Yuan and Wang~\cite{Yuan2003}]
\label{thm:chi1-2d1-bound}
A connected graph $G$ having at least one perfect matching has $\chi^=_1(G) \leq 2 \Delta(G) - 1$. Further, equality holds if and only if $G$ is isomorphic to either $K_2$, $C_{4k+2}$, or the Petersen graph.
\end{theorem}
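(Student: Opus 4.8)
The plan is to apply Lemma~\ref{lem:chi1-perf}, which reduces the problem to bounding $\chi(G/M)$ for a well-chosen perfect matching. Fix any perfect matching $M$ and let $z$ be the vertex of $G/M$ obtained by contracting an edge $xy \in M$. In $G/M$ the neighbors of $z$ are exactly the neighbors of $x$ and of $y$ other than $x,y$ themselves, so $z$ has degree at most $(\deg x - 1) + (\deg y - 1) \le 2\Delta(G) - 2$. Hence $\Delta(G/M) \le 2\Delta(G) - 2$, and a greedy coloring gives $\chi(G/M) \le \Delta(G/M) + 1 \le 2\Delta(G) - 1$. As this holds for every $M$, Lemma~\ref{lem:chi1-perf} yields $\chi^=_1(G) \le 2\Delta(G) - 1$. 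For the three extremal families the bound is easily seen to be tight: $\chi^=_1(K_2) = 1 = 2\Delta(G) - 1$; contracting either perfect matching of $C_{4k+2}$ gives the odd cycle $C_{2k+1}$, so $\chi^=_1(C_{4k+2}) = 3$; and contracting any perfect matching of the Petersen graph gives $K_5$, so $\chi^=_1 = 5$.

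\textbf{Reduction to Brooks' theorem.} For the converse, observe that the upper-bound argument shows $\chi(G/M) \le 2\Delta(G) - 1$ for \emph{every} perfect matching $M$; hence the hypothesis $\chi^=_1(G) = 2\Delta(G) - 1$ forces $\chi(G/M) = 2\Delta(G)-1$ for every $M$. Fixing one such $M$ and combining with $\chi(G/M) \le \Delta(G/M)+1 \le 2\Delta(G)-1$, we conclude $\Delta(G/M) = 2\Delta(G)-2$ and $\chi(G/M) = \Delta(G/M)+1$. Since $G$ is connected, so is $G/M$, and Brooks' theorem now forces $G/M$ to be either a complete graph or an odd cycle.

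\textbf{Easy cases.} If $G/M$ is an odd cycle, then $\Delta(G/M) = 2$, so $\Delta(G) = 2$; a connected graph of maximum degree two carrying a perfect matching is an even cycle $C_{2m}$ (a path contracts to a bipartite path), and $C_{2m}/M = C_m$ is odd exactly when $m$ is odd, giving $G = C_{4k+2}$. If instead $G/M = K_{2\Delta(G)-1}$, then this quotient is $(2\Delta(G)-2)$-regular, so every contracted vertex attains degree $2\Delta(G)-2$; this forces $G$ to be $\Delta(G)$-regular, and comparing $|E(G/M)| = \binom{2\Delta(G)-1}{2}$ with the number $|E(G)| - |M| = (2\Delta(G)-1)(\Delta(G)-1)$ of non-matching edges shows there is exactly one edge of $G$ between each pair of matched edges. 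The case $\Delta(G) = 1$ gives $G = K_2$, and $\Delta(G) = 2$ gives $G = C_6 = C_{4\cdot 1+2}$.

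\textbf{The main obstacle.} All the real difficulty is concentrated in the complete-quotient case with $\Delta(G) \ge 3$, where $G$ is a $\Delta(G)$-regular graph on $4\Delta(G)-2$ vertices whose $2\Delta(G)-1$ matched pairs are mutually joined by exactly one edge, with the $2\Delta(G)-2$ edges leaving each pair split evenly ($\Delta(G)-1$ each) between its two endpoints. This rigid, design-like incidence structure must be analysed directly, and crucially one must use that \emph{every} perfect matching (not merely the fixed $M$) contracts to a complete graph. I expect the two hard steps to be: showing that for $\Delta(G) \ge 4$ one can always find an $M$-alternating cycle whose swap produces a perfect matching $M'$ with $G/M'$ not complete (hence $\chi(G/M') < 2\Delta(G)-1$, a contradiction), thereby ruling out all large $\Delta(G)$; and, for $\Delta(G) = 3$, classifying the admissible splittings of the four cross-edges at each of the five pairs and verifying that up to isomorphism the only surviving configuration is the Petersen graph. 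This non-existence-plus-uniqueness analysis, rather than the bound or the appeal to Brooks' theorem, is where the weight of the proof lies.
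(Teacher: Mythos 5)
This theorem is not proved in the paper at all: it is imported verbatim from Yuan and Wang~\cite{Yuan2003}, so there is no in-paper argument to compare yours against; your attempt has to be judged as a proof of the cited result itself. Measured that way, the routine parts of your write-up are sound. The degree estimate $\Delta(G/M)\le 2\Delta(G)-2$ plus greedy coloring gives $\chi^=_1(G)\le 2\Delta(G)-1$ via Lemma~\ref{lem:chi1-perf}; equality forces $\chi(G/M)=\Delta(G/M)+1=2\Delta(G)-1$ for \emph{every} perfect matching $M$, and Brooks' theorem correctly narrows each quotient $G/M$ down to a complete graph or an odd cycle. The odd-cycle branch, the cases $\Delta(G)\in\{1,2\}$ of the complete branch, and the verification that $K_2$, $C_{4k+2}$ and the Petersen graph actually attain the bound are all handled correctly, including the edge count showing exactly one edge joins each pair of matched pairs when $G/M$ is complete.

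However, the proof has a genuine gap, which you candidly flag yourself: the entire complete-quotient case with $\Delta(G)\ge 3$ is left as a plan rather than an argument. What remains is precisely the content of the Yuan--Wang characterization: that a $\Delta$-regular graph on $4\Delta-2$ vertices, all of whose perfect matchings contract to $K_{2\Delta-1}$, cannot exist for $\Delta\ge 4$, and for $\Delta=3$ must be the Petersen graph. Your proposed route --- find an $M$-alternating cycle whose swap produces a perfect matching $M'$ with $G/M'$ not complete, then do a finite classification at $\Delta=3$ --- is plausible, but neither step is executed: no alternating cycle is exhibited, no proof is given that the swap destroys completeness, and the $\Delta=3$ case analysis is not performed. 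Since everything before this point is routine and this is where the weight of the theorem lies, the proposal as written establishes the upper bound and the sufficiency of the three extremal graphs, but not the necessity direction (``equality holds \emph{only if}''), i.e., it proves strictly less than the stated theorem.
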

\noindent With these results at hand, let us consider exact $(k,d)$-colorings of some structured graphs.
\begin{proposition}
For every $n \geq 3$, it holds that
\begin{equation*}
\chi^=_1(C_n) = \begin{cases}
2 &\text{if $n$ is divisible by four,}\\
3 &\text{if $n$ is even but not divisible by four, and}\\
\infty &\text{otherwise}.
\end{cases}
\end{equation*}
\end{proposition}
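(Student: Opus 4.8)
The plan is to reduce everything to the characterization in Lemma~\ref{lem:chi1-perf}, which expresses $\chi_1^=(G)$ as the minimum of $\chi(G/M)$ over all perfect matchings $M$, combined with a careful enumeration of the perfect matchings of a cycle. First I would observe that in any exact $(k,1)$-coloring the set of monochromatic edges (edges whose two endpoints receive the same color) forms a perfect matching, since every vertex is incident to exactly one such edge by definition. Because $C_n$ has a perfect matching if and only if $n$ is even, this immediately yields $\chi_1^=(C_n) = \infty$ whenever $n$ is odd, disposing of the third case; it also confirms that the hypothesis of Lemma~\ref{lem:chi1-perf} is satisfied precisely when $n$ is even.

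For even $n$, the next step is to enumerate the perfect matchings of $C_n$. Writing the vertices cyclically as $v_1, \ldots, v_n$, the only available edges are the cycle edges $v_i v_{i+1}$, and a routine alternation argument shows that there are exactly two perfect matchings, namely the two sets of alternate edges. These are related by a rotation and hence yield isomorphic contractions, so the minimum in Lemma~\ref{lem:chi1-perf} is attained at either one. Contracting $C_n$ along such a matching merges consecutive pairs into $n/2$ new vertices arranged in a cycle, giving $C_n / M \cong C_{n/2}$ for $n \geq 6$, so that $\chi_1^=(C_n) = \chi(C_{n/2})$.

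It then remains to evaluate $\chi(C_{n/2})$, and here the parity of $n/2$ controls the answer: since $\chi(C_m) = 2$ for even $m$ and $\chi(C_m) = 3$ for odd $m \geq 3$, we obtain $2$ when $4 \mid n$ (so that $n/2$ is even) and $3$ when $n \equiv 2 \pmod 4$ (so that $n/2$ is odd and at least $3$). This matches the two finite cases of the statement.

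The main obstacle—indeed the only delicate point—is the small degenerate case $n = 4$: contracting $C_4$ along either perfect matching produces a pair of parallel edges that collapse to a single edge, so $C_4 / M \cong K_2$ rather than a two-vertex ``cycle.'' As $\chi(K_2) = 2$, this remains consistent with the formula $\chi(C_{n/2})$, but I would handle it explicitly to avoid any ambiguity about the meaning of $C_2$. With that case pinned down, the three cases assemble into the claimed piecewise formula.
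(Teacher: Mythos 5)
Your proof is correct, but it takes a genuinely different route from the paper's. The paper splits the even cases: for $4 \mid n$ it exhibits an explicit exact $(2,1)$-coloring (coloring the endpoint pairs of alternate edges with colors $1$ and $2$ in alternation) together with the trivial lower bound $k > 1$; for $n \equiv 2 \pmod 4$ it invokes the equality characterization in Theorem~\ref{thm:chi1-2d1-bound} --- the bound $\chi^=_1(G) \leq 2\Delta(G) - 1$ holds with equality exactly for $K_2$, $C_{4k+2}$, and the Petersen graph --- to conclude $\chi^=_1(C_n) = 2\Delta - 1 = 3$. You instead run every even case uniformly through Lemma~\ref{lem:chi1-perf}: enumerate the two perfect matchings of an even cycle, identify the contraction $C_n/M \cong C_{n/2}$ (with the degenerate $C_4/M \cong K_2$ treated separately, which is indeed necessary since the paper's contraction is defined to produce a simple graph), and read off the answer from the parity of $n/2$. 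Your route is more self-contained and more elementary: it avoids the deeper equality characterization entirely, and it yields the lower bound of $3$ in the $n \equiv 2 \pmod 4$ case directly as the minimum over all matchings, rather than as a black box. The paper's route is shorter on the page and skips the matching-enumeration and contraction bookkeeping, at the cost of citing a stronger result of Yuan and Wang. Both arguments dispose of odd $n$ identically, via the non-existence of a perfect matching.
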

\begin{proof}
By definition, in any exact $(k,1)$-coloring each vertex $v$ of $C_n$ must be adjacent to precisely one neighbor with the same color as itself.
Now, as any $v$ has degree two, we must have $k > 1$ for otherwise $v$ would be forced to neighbor two vertices sharing the same color as itself. 

Suppose that $n$ is a multiple of four.
We order the edges of $C_n$ clockwise and color both the endpoints of the even ones alternately with colors 1 and 2. 
This is an exact $(2,1)$-coloring proving that $\chi^=_1(C_n) = 2$.
On the other hand, suppose that $n$ is even but not a multiple of four.
In this case we have by Theorem~\ref{thm:chi1-2d1-bound} that $\chi^=_1(C_n) = 2 \Delta - 1 = 3$.
Finally, when $n$ is odd, $C_n$ has no perfect matching and thus no solution exists.
\end{proof}

\begin{proposition}
For every $n \geq 2$, it holds that $\chi^=_2(C_n) = 1$.
\end{proposition}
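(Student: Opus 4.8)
The plan is to exhibit an explicit exact $(1,2)$-coloring and then argue that it is optimal. The key structural fact I would lean on is that $C_n$ is $2$-regular: every vertex has exactly two neighbors. This means that the chosen defect $d = 2$ is precisely the degree of each vertex, so the exactness condition collapses to requiring that \emph{all} neighbors of each vertex share its color.

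First I would take the coloring $c$ that assigns color $1$ to every vertex of $C_n$. For an arbitrary vertex $v$, both of its two neighbors also receive color $1$, so $v$ has exactly $d = 2$ neighbors of the same color as itself. Since $v$ was arbitrary, $c$ is an exact $(1,2)$-coloring, which witnesses $\chi^=_2(C_n) \leq 1$.

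For the matching lower bound, I would simply note that any coloring uses at least one color, so trivially $\chi^=_2(C_n) \geq 1$. Combining the two inequalities gives $\chi^=_2(C_n) = 1$ for every $n$ in the stated range.

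I expect no genuine obstacle here: the result is immediate from the $2$-regularity of $C_n$ together with the choice $d = 2$, which is exactly what makes the monochromatic coloring automatically valid. The only point worth stating explicitly is that a single color is the minimum possible number of colors, so the monochromatic coloring is not merely feasible but in fact optimal.
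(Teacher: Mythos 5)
Your proof is correct and takes exactly the same approach as the paper: color $C_n$ monochromatically, observe that $2$-regularity makes every vertex see exactly $d=2$ neighbors of its own color, and note that this meets the trivial lower bound of one color. The paper's proof is a one-line version of the same argument.
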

\begin{proof}
We meet the trivial lower bound by coloring the cycle monochromatic.
\end{proof}

A \emph{wheel graph} on $n$ vertices, denoted as $W_n$, is obtained by adding a universal vertex to a cycle $C_{n-1}$.
\begin{proposition}
For every $n \geq 4$, it holds that
\begin{equation*}
\chi^=_1(W_n) = \begin{cases}
2 &\text{if $n = 4$,}\\
3 &\text{if $n > 4$ is even, and}\\
\infty &\text{otherwise}.
\end{cases}
\end{equation*}
\end{proposition}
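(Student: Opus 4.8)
The plan is to reduce everything to Lemma~\ref{lem:chi1-perf}, after first handling the cases where no exact $(k,1)$-coloring can exist. First I would record the general principle (already implicit in the introduction): in any exact $(k,1)$-coloring every vertex has exactly one neighbor of its own color, so the set of monochromatic adjacent pairs is a spanning $1$-regular subgraph, i.e.\ a perfect matching of $W_n$. Since $W_n$ has $n$ vertices, a perfect matching forces $n$ to be even. Hence for odd $n$ (the ``otherwise'' case, $n \ge 5$) no exact $(k,1)$-coloring exists and $\chi^=_1(W_n) = \infty$.

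For even $n$, I would invoke Lemma~\ref{lem:chi1-perf}, giving $\chi^=_1(W_n) = \min_M \chi(W_n / M)$ over perfect matchings $M$, and then argue that $W_n$ has essentially only one such matching. Write $h$ for the hub and $v_1, \ldots, v_{n-1}$ for the rim in cyclic order. Because $h$ is adjacent only to rim vertices, in any perfect matching $h$ is matched to some $v_i$; by the rotational symmetry of the rim I may take $i = 1$. Deleting $h$ and $v_1$ leaves the path $v_2 v_3 \cdots v_{n-1}$ on $n-2$ (even) vertices, whose perfect matching is unique, namely $\{v_2 v_3, v_4 v_5, \ldots, v_{n-2} v_{n-1}\}$. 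So up to isomorphism there is a single perfect matching $M$, and the minimum in the lemma is attained by it.

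Next I would identify the contracted graph explicitly. Let $z_0$ be the vertex obtained from $\{h, v_1\}$ and $z_j$ the one from $\{v_{2j}, v_{2j+1}\}$ for $1 \le j \le m$, where $m = (n-2)/2$. Since $h$ is universal, $z_0$ is adjacent to every $z_j$; consecutive pairs are joined through the cycle edge $v_{2j+1} v_{2j+2}$, whereas $v_2$ and $v_{n-1}$ have (among the remaining vertices) only the neighbor $v_1 \in z_0$. Thus the $z_j$ form a path $z_1 z_2 \cdots z_m$ with $z_0$ joined to all of them, so $W_n / M$ is the fan graph (a path plus a universal vertex) on $m+1$ vertices. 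Finishing is then routine: if $m = 1$, i.e.\ $n = 4$, the fan is $K_2$ and $\chi = 2$; if $m \ge 2$, i.e.\ $n \ge 6$, the universal vertex together with any path edge gives a triangle, so $\chi \ge 3$, while two-coloring the bipartite path and using a third color for the universal vertex shows $\chi = 3$.

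The bookkeeping in the third step is the main obstacle: I must verify carefully that the contracted rim is genuinely a path (in particular that there is no wrap-around edge between the first and last pair $z_1$ and $z_m$) and that no adjacency is overlooked, and I should treat the degenerate case $n = 4$ separately, noting there that $W_4 = K_4$ collapses to $K_2$ under the contraction, yielding $\chi^=_1(W_4) = 2$.
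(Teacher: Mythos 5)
Your proposal is correct and follows essentially the same route as the paper: both handle odd $n$ via the nonexistence of a perfect matching, and for even $n$ apply Lemma~\ref{lem:chi1-perf} after identifying $W_n / M$ as a fan graph (a path plus a universal vertex), whose chromatic number is $2$ for $n=4$ and $3$ for $n>4$. The only difference is one of detail: where the paper asserts ``it is not difficult to see,'' you explicitly verify the uniqueness of the perfect matching up to rotation and the path structure of the contracted rim, and you fold the $n=4$ case into the same fan computation rather than checking it separately.
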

\begin{proof}
When $n$ is odd, $W_n$ has no perfect matching and thus no solution exists.
It is simple to check that $\chi^=_1(W_4) = 2$, so suppose that $n > 4$.
Consider any perfect matching $M$ of $W_n$.
It is not difficult to see that $W_n / M$ is isomorphic to a fan graph, that is, a path with a universal vertex added which has chromatic number three.
Therefore, the proof follows via Lemma~\ref{lem:chi1-perf}.
\end{proof}

Let us then consider trees. Note that every non-trivial tree has a vertex of degree one, so we need not consider exact $(k,d)$-colorings for any $d > 1$.

\begin{proposition}
A tree $T$ on at least three vertices has $\chi^=_1(T) = 2$ if and only if $T$ has a perfect matching. Furthermore, if $\chi^=_1(T) \neq 2$ then $\chi^=_1(T) = \infty$.
\end{proposition}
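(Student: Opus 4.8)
The plan is to pass through the correspondence between exact $(k,1)$-colorings and perfect matchings and then invoke Lemma~\ref{lem:chi1-perf}. First I would record the structural fact underlying every exact $(k,1)$-coloring: if $c$ is such a coloring, then each vertex $v$ has a unique same-colored neighbor, and this relation is symmetric (if $u$ is the same-colored neighbor of $v$, then $v$ must be the unique same-colored neighbor of $u$). Hence the set of monochromatic edges $\{uv \in E(T) : c(u) = c(v)\}$ is a perfect matching of $T$. Consequently, if $\chi^=_1(T) < \infty$, then $T$ necessarily has a perfect matching. I would also note the trivial lower bound: a tree on at least three vertices is connected and not $1$-regular, so a single color cannot work, giving $\chi^=_1(T) \geq 2$ whenever it is finite.

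For the forward direction I would assume $T$ has a perfect matching $M$ and show $\chi(T/M) = 2$, which by Lemma~\ref{lem:chi1-perf} yields $\chi^=_1(T) = 2$. The key observation is that contracting a matching in a tree again produces a tree: contracting a single edge of a tree yields a tree, and since the edges of $M$ are pairwise disjoint the contractions are independent. Moreover, the two endpoints of a matching edge cannot share a common neighbor, since that would create a triangle and contradict acyclicity; thus no parallel edges arise and $T/M$ remains simple. Hence $T/M$ is a tree on $|V(T)|/2 \geq 2$ vertices, so it is bipartite and has at least one edge, giving $\chi(T/M) = 2$. Because \emph{every} perfect matching of a tree contracts in this way, the minimum in Lemma~\ref{lem:chi1-perf} equals $2$, and uniqueness of the perfect matching is not even needed.

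Finally, the ``furthermore'' clause is immediate from the first paragraph: if $\chi^=_1(T) \neq 2$, then by the forward direction $T$ has no perfect matching, and therefore admits no exact $(k,1)$-coloring for any $k$, so $\chi^=_1(T) = \infty$.

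I expect the only genuinely delicate point to be verifying that $T/M$ is a simple tree rather than merely a bipartite multigraph; once acyclicity and the absence of parallel edges are checked, everything else follows mechanically from Lemma~\ref{lem:chi1-perf}.
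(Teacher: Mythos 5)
Your proof is correct and follows essentially the same route as the paper's: contract a perfect matching of $T$ to get a tree, invoke Lemma~\ref{lem:chi1-perf}, and observe that any exact $(k,1)$-coloring's monochromatic edges form a perfect matching, which handles the converse and the ``furthermore'' clause. Your extra checks (simplicity of $T/M$ and the lower bound $\chi^=_1(T)\geq 2$) are details the paper leaves implicit, but they add rigor rather than change the argument.
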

\begin{proof}
If $T$ has a perfect matching $M$, then $T/M$ is cycle-free (as it is also a tree) and thus 2-colorable. Therefore, by Lemma~\ref{lem:chi1-perf}, we have that $\chi^=_1(T) = 2$. Since every exact $(k, 1)$-coloring for any graph defines a perfect matching, the result follows.
\end{proof}

We will now give a simple lower bound for any exact $d$-defective chromatic number based on the structure of the given graph. This is a more general result to the observation that the chromatic number of a graph (i.e., when $d = 0$) is not less than the \emph{clique number} of the graph, that is, the order of its largest complete subgraph.

\begin{proposition}
For every $d \geq 0$, any graph $G$ has $\chi^=_d(G) \geq \lceil \omega / (d+1) \rceil$, where $\omega$ is the clique number of $G$.
\end{proposition}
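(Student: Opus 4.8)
The plan is to establish the lower bound by reasoning directly about a maximum clique inside any given exact $(k,d)$-coloring. Suppose $G$ admits an exact $(k,d)$-coloring (if it admits none, then $\chi^=_d(G) = \infty$ and the inequality holds trivially), and let $K$ be a clique of $G$ on $\omega$ vertices. First I would restrict attention to the coloring on $K$: each vertex of $K$ is adjacent within $K$ to all $\omega - 1$ other clique vertices. The key observation is that within a color class, each vertex has exactly $d$ same-colored neighbors \emph{in all of $G$}, so a fortiori it has at most $d$ same-colored neighbors inside $K$.

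The heart of the argument is then a simple counting bound on how many vertices of $K$ can share a single color. Fix a color, say color $i$, and let $S_i = \{v \in K : c(v) = i\}$ be the set of clique vertices receiving color $i$. Since $K$ is a clique, every pair of vertices in $S_i$ is adjacent, so each vertex of $S_i$ has exactly $|S_i| - 1$ same-colored neighbors \emph{inside} $K$. Because each such vertex may have exactly $d$ same-colored neighbors in total across $G$, we get $|S_i| - 1 \leq d$, that is, $|S_i| \leq d + 1$. I would state this as the central claim and prove it in one line from the definition of an exact $(k,d)$-coloring together with the fact that a clique forces all these internal adjacencies.

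With the per-color bound in hand, the conclusion follows by a pigeonhole count. The $\omega$ vertices of $K$ are partitioned among the (at most $k$) color classes restricted to $K$, and each class contributes at most $d+1$ vertices, so $\omega \leq k(d+1)$. Rearranging and using that $k$ is an integer yields $k \geq \lceil \omega / (d+1) \rceil$. Since this holds for any exact $(k,d)$-coloring, in particular for one achieving $\chi^=_d(G)$ colors, we obtain $\chi^=_d(G) \geq \lceil \omega / (d+1) \rceil$, as desired.

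I do not anticipate a serious obstacle here; the argument is a direct generalization of the classical $\chi(G) \geq \omega(G)$ bound, recovered by setting $d = 0$ so that each color class meets $K$ in at most one vertex. The only point requiring a little care is the distinction between same-colored neighbors \emph{within $K$} and same-colored neighbors \emph{in all of $G$}: the exactness condition constrains the global count to be $d$, and since the internal count can only be smaller, the bound $|S_i| - 1 \leq d$ goes through cleanly. It is also worth noting explicitly that the statement is vacuous but still correct in the $\chi^=_d(G) = \infty$ case, so the proof should open by disposing of that case first.
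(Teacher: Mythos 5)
Your proposal is correct and follows essentially the same argument as the paper: both bound the number of vertices of any single color inside a maximum clique by $d+1$ (since a vertex with more same-colored clique neighbors would exceed its exact defect $d$) and then apply a pigeonhole count to conclude $k \geq \lceil \omega/(d+1) \rceil$. The paper phrases this as a one-line contradiction while you argue the contrapositive directly with the explicit per-color sets $S_i$, but the mathematical content is identical.
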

\begin{proof}
Suppose this was not the case, i.e., that there is an exact $(k,d)$-coloring of $G$ using fewer than $\lceil \omega / (d+1) \rceil$ colors.
As all vertices of $G$ are colored, there is a color which appears more than $d+1$ times in a clique of size $\omega$, contradicting the fact that we have an exact $(k,d)$-coloring.
\end{proof}
\noindent By applying the result, we find the exact $d$-defective chromatic number for complete graphs.
\begin{corollary}
For every $d \geq 0$ and $n \geq 1$, 
\begin{equation*}
\chi^=_d(K_n) = \begin{cases}
n / (d+1) &\text{if $n$ is divisible by $(d+1)$, and}\\
\infty &\text{otherwise}.
\end{cases}
\end{equation*}
\end{corollary}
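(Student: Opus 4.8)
The plan is to exploit the rigidity of cliques: in $K_n$ every color class is itself a complete graph, which pins down its size completely. First I would observe that since $K_n$ contains all $\binom{n}{2}$ possible edges, any set $S$ of vertices receiving a common color induces the complete graph $K_{|S|}$, in which every vertex has degree $|S|-1$. In an exact $(k,d)$-coloring each color class must induce a $d$-regular subgraph, so this forces $|S|-1 = d$; that is, \emph{every} color class has exactly $d+1$ vertices.

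Next I would translate this into a divisibility condition. A valid exact $(k,d)$-coloring is then precisely a partition of the $n$ vertices of $K_n$ into parts of size exactly $d+1$. Such a partition exists if and only if $(d+1)$ divides $n$. If $(d+1) \nmid n$, then no partition of $n$ vertices into parts all of size $d+1$ is possible, so no exact coloring exists and $\chi^=_d(K_n) = \infty$, which handles the second case.

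Finally, when $(d+1) \mid n$, the number of parts in any such partition is forced to be exactly $n/(d+1)$, since all parts have the common size $d+1$ and together cover all $n$ vertices. This simultaneously exhibits an exact coloring with $k = n/(d+1)$ colors and shows that no coloring can use fewer; equivalently, it matches the lower bound $\lceil \omega/(d+1) \rceil = \lceil n/(d+1) \rceil = n/(d+1)$ supplied by the preceding proposition together with $\omega(K_n) = n$. This yields $\chi^=_d(K_n) = n/(d+1)$ and closes both cases.

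I expect no serious obstacle here, since the whole argument rests on the single structural fact that a monochromatic class in a complete graph is again complete, so its regularity degree equals its size minus one. The only point requiring a moment's care is checking that the reasoning covers all $n \geq 1$ and $d \geq 0$ uniformly, including the degenerate regimes $d = 0$ (where classes are singletons and we recover $\chi(K_n) = n$) and $n < d+1$ (where $(d+1) \nmid n$ correctly forces the value $\infty$, consistently with the earlier observation that no exact coloring exists once $d > \delta(G)$).
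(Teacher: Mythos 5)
Your proof is correct and follows essentially the route the paper intends: the paper states the corollary as an immediate consequence of the lower bound $\chi^=_d(G) \geq \lceil \omega/(d+1) \rceil$, with the partition into copies of $K_{d+1}$ as the matching upper bound, and your rigidity argument (every color class in $K_n$ is complete, hence forced to have exactly $d+1$ vertices) is precisely the observation the paper leaves implicit for the divisibility and $\infty$ cases. Your write-up is simply a more self-contained version of the same argument, with the degenerate cases $d=0$ and $n < d+1$ checked explicitly.
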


\section{Exact $d$-defective coloring and proper coloring}
\label{sec:general-results}

Let $G=(V,E)$ be a graph.
For any positive integer $d$, let $\mathcal{R}_d(G)$ be the collection of all possible partitions of $V$ into $d$-regular induced subgraphs.
For instance, when $d = 2$, an element of $\mathcal{R}_d(G)$ is a partition of $G$ into induced cycles each of which has length at least three. 
In other words, we arrive at the following generalization of Lemma~\ref{lem:chi1-perf}.
\begin{lemma}
\label{lem:chid-perf}
Let $d$ be a positive integer. A graph $G$ with non-empty $\mathcal{R}_d(G)$ has
\begin{equation*}
\chi^=_d(G) = \min \{ \chi(G / H) \mid H \in \mathcal{R}_d(G) \}.
\end{equation*}
\end{lemma}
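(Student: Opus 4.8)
The plan is to generalize the proof idea behind Lemma~\ref{lem:chi1-perf}, replacing the role of a perfect matching with an arbitrary partition $H \in \mathcal{R}_d(G)$ into $d$-regular induced subgraphs. The key observation is that an exact $(k,d)$-coloring of $G$ is precisely a choice of such a partition $H$ (the color classes) together with an assignment of the $k$ colors so that the classes receiving the same color are mutually non-adjacent. Contracting each block of $H$ to a single vertex captures exactly the adjacency constraints between color classes, so that proper colorings of $G/H$ correspond to valid color assignments. I would establish the equality by proving two inequalities.

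First I would show $\chi^=_d(G) \le \min\{\chi(G/H) \mid H \in \mathcal{R}_d(G)\}$. Fix an $H \in \mathcal{R}_d(G)$ attaining the minimum and let $c$ be a proper $\chi(G/H)$-coloring of $G/H$. Each block $B$ of $H$ is represented by a vertex $z_B$ in $G/H$; I would color every vertex of $G$ lying in $B$ with the color $c(z_B)$. Since $B$ induces a $d$-regular subgraph of $G$, every vertex $v \in B$ already has exactly $d$ neighbors inside $B$, all sharing $v$'s color. I must then verify that $v$ has no \emph{additional} same-colored neighbor outside $B$: if $v$ had a neighbor $w$ in a distinct block $B'$ with the same color, then $z_B z_{B'}$ would be an edge of $G/H$ receiving equal colors under $c$, contradicting properness. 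Hence the coloring is an exact $(k,d)$-coloring with $k = \chi(G/H)$.

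For the reverse inequality $\chi^=_d(G) \ge \min\{\chi(G/H) \mid H \in \mathcal{R}_d(G)\}$, I would start from any optimal exact $(k,d)$-coloring of $G$ with $k = \chi^=_d(G)$. Each color class induces a $d$-regular subgraph by the exactness condition, but a single color class need not be connected; I would take $H$ to be the partition of $V$ into the connected components of the color classes, each of which is itself $d$-regular and therefore a valid block, so $H \in \mathcal{R}_d(G)$. Contracting these blocks yields $G/H$, and assigning each contracted vertex $z_B$ the common color of the block $B$ gives a proper coloring of $G/H$: two adjacent contracted vertices come from blocks joined by an edge of $G$, and had they the same color, some vertex would have a same-colored neighbor outside its own $d$-regular component, violating exactness. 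This proper coloring uses at most $k$ colors, so $\chi(G/H) \le k = \chi^=_d(G)$, completing the chain.

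The main obstacle is the subtlety in the second direction: color classes in an exact coloring may be disconnected, so one cannot simply contract the color classes wholesale (contracting a disconnected class is not a matching-style contraction and would not preserve $d$-regularity within the new vertex's preimage structure). The fix is to refine into \emph{connected components}, each of which is genuinely $d$-regular and hence a legitimate element-block of some $H \in \mathcal{R}_d(G)$; I would need to state explicitly that $\mathcal{R}_d(G)$ is the set of partitions into $d$-regular induced subgraphs without requiring connectivity of the union of equally-colored blocks, so that this refinement is admissible. Care is also warranted in confirming that the definition of $G/H$ for a general partition into induced subgraphs behaves as expected (neighbors of $z_B$ are exactly the images of vertices adjacent to some vertex of $B$), mirroring the matching-contraction notation $G/M$ introduced earlier; once this correspondence is pinned down, both inequalities follow by the adjacency argument above.
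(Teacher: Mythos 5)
Your proof is correct and follows essentially the same route as the paper's: both directions pass colorings through the quotient $G/H$, lifting a proper coloring of $G/H$ to a blockwise-monochromatic exact $(k,d)$-coloring of $G$, and conversely extracting a partition in $\mathcal{R}_d(G)$ from the color classes of an optimal exact coloring. The only divergence is your refinement of color classes into connected components before contracting; the paper takes the color classes themselves as the element of $\mathcal{R}_d(G)$ (its definition does not require parts to be connected, and that quotient trivially has at most $k$ vertices), so your refinement is not needed but is harmless extra care that also covers the stricter reading of $\mathcal{R}_d(G)$ suggested by the paper's $d=2$ example.
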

\begin{proof}
By definition, the color classes of any exact $(k,d)$-coloring of $G$ give a partition of $G$ into induced $d$-regular subgraphs, corresponding to one element in $\mathcal{R}_d(G)$.
Conversely, any element of $H \in \mathcal{R}_d(G)$ gives a valid exact $(k,d)$-coloring for $G$.
Indeed, we can blow-up $G / H$ to obtain $G$ and color the set of vertices, say $G_H$, corresponding to an element of $H$ with the same color.
As $G/H$ was properly colored, we are guaranteed the vertices in $G_H$ do not see their color on adjacent vertices outside of $G_H$.
\end{proof}
Before proceeding, recall that a graph $H$ is a \emph{minor} of $G$ if $H$ can be obtained from $G$ by deleting edges and vertices and by contracting edges.
More restrictively, $H$ is an \emph{induced minor} of $G$ if $H$ can be obtained from $G$ by only contracting edges. 
Now, consider any graph $G$ that belongs to an induced minor closed family of graphs $\mathcal{G}$. We can perform an arbitrary sequence of edge contractions on $G$ to obtain $G'$, and by definition $G'$ is also a member of $\mathcal{G}$.
It is possible that $\chi(G') > \chi(G)$, but assuming we have a good (e.g., constant) bound on the chromatic number of any member of $\mathcal{G}$, we can still observe the following.
\begin{theorem}
\label{thm:general-ted}
Let $d$ be a positive integer and $\mathcal{G}$ a class of induced minor closed graphs. Every graph $G \in \mathcal{G}$ has $\chi_d^=(G) \leq \max_{H \in \mathcal{G}} \chi(H)$ if and only if $\mathcal{R}_d(G)$ is non-empty.
\end{theorem}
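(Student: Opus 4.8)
The plan is to prove the two implications separately, writing $c = \max_{H \in \mathcal{G}} \chi(H)$ and understanding $c$ to be finite (in the intended applications to planar, outerplanar, chordal, or bounded-treewidth classes it is a constant; if $c = \infty$ the inequality $\chi_d^=(G) \le c$ is vacuous and the stated equivalence degenerates). For the forward direction I would argue directly from the definitions: if $\chi_d^=(G) \le c < \infty$, then $G$ admits an exact $(k,d)$-coloring for some finite $k$, and as observed just before Lemma~\ref{lem:chid-perf} the color classes of any such coloring partition $V(G)$ into induced $d$-regular subgraphs, i.e.\ they form an element of $\mathcal{R}_d(G)$; hence $\mathcal{R}_d(G) \neq \emptyset$.

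The substance lies in the backward direction. Suppose $\mathcal{R}_d(G) \neq \emptyset$ and fix a partition $H_0 \in \mathcal{R}_d(G)$. The first key step is to pass to the refinement $H_0'$ obtained by splitting each part of $H_0$ into its connected components. Here I would invoke the elementary fact that every connected component of a $d$-regular graph is again $d$-regular: since no edge leaves a component, each vertex retains its full degree $d$ inside its own component. Consequently $H_0'$ is still a partition of $V(G)$ into induced $d$-regular subgraphs, so $H_0' \in \mathcal{R}_d(G)$ and it remains an admissible choice in the minimization of Lemma~\ref{lem:chid-perf}.

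With the refinement in hand, the point is that every part of $H_0'$ is now connected, so contracting each part to a single vertex is realized by a sequence of edge contractions; therefore $G / H_0'$ is an induced minor of $G$. Because $\mathcal{G}$ is induced-minor closed, $G / H_0' \in \mathcal{G}$, and hence $\chi(G / H_0') \le c$. Applying Lemma~\ref{lem:chid-perf} then gives $\chi_d^=(G) = \min_{H \in \mathcal{R}_d(G)} \chi(G / H) \le \chi(G / H_0') \le c$, which is exactly the desired bound and completes the equivalence.

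I expect the main obstacle to be precisely the connectivity subtlety that forces the refinement step. A $d$-regular induced subgraph need not be connected (for example a disjoint union of cycles when $d = 2$), so contracting an entire part of $H_0$ to one vertex is \emph{not} in general a sequence of edge contractions, and $G / H_0$ need not be an induced minor of $G$; a naive appeal to Lemma~\ref{lem:chid-perf} together with induced-minor closure would therefore be unjustified. Refining to connected components repairs this, and the argument hinges on the two facts that the refinement keeps every part $d$-regular (so $H_0'$ still lies in $\mathcal{R}_d(G)$) and that the resulting contraction is genuinely an induced minor (so $\chi(G/H_0')$ is controlled by $c$).
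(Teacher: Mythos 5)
Your proof is correct and follows essentially the same route as the paper: the paper gives no separate proof of Theorem~\ref{thm:general-ted}, relying on Lemma~\ref{lem:chid-perf} together with the preceding observation that contracting the parts of a partition in $\mathcal{R}_d(G)$ produces an induced minor, which by closure stays in $\mathcal{G}$ and hence has chromatic number at most $\max_{H \in \mathcal{G}} \chi(H)$. Your refinement of each part into its connected components is not a deviation but a welcome tightening: it addresses the point (glossed over in the paper's informal treatment) that a $d$-regular part need not be connected, in which case contracting it to a single vertex is not a sequence of edge contractions, so the refinement is exactly what justifies calling $G/H$ an induced minor.
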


\begin{corollary}
\label{cor:explicit-ted-bounds}
Let $d$ be a positive integer. A graph $G$ has 
\begin{enumerate}[label={(\roman*)}]
\item $\chi_d^=(G) \leq 4$ when $G$ is planar for $1 \leq d \leq 5$,
\item $\chi_d^=(G) \leq 3$ when $G$ is outerplanar for $1 \leq d \leq 2$, 
\item $\chi_d^=(G) \leq p + 1$ when $G$ has treewidth at most $p$ for $1 \leq d \leq p$, and
\item $\chi_d^=(G) \leq \omega(G)$ when $G$ is chordal for $1 \leq d \leq \omega(G)$
\end{enumerate}
if and only if $\mathcal{R}_d(G)$ is non-empty. Moreover, in each case, $\chi_d^=(G) = \infty$ when $\mathcal{R}_d(G)$ is empty or $d$ is not in the specified range.
\end{corollary}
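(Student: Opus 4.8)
The plan is to derive each of the four bounds as an instance of Theorem~\ref{thm:general-ted}. For a class $\mathcal{G}$, that theorem reduces the task to three sub-claims: (a) $\mathcal{G}$ is induced minor closed; (b) the quantity $\max_{H \in \mathcal{G}} \chi(H)$ equals the stated bound; and (c) for $d$ outside the specified range, $\mathcal{R}_d(G)$ is empty (so that the ``moreover'' clause holds), while inside the range the theorem supplies the inequality whenever $\mathcal{R}_d(G)$ is non-empty. Sub-claim (c) will in every case follow from a \emph{degeneracy} argument: if every member of $\mathcal{G}$ has a vertex of degree at most $c$, then $\mathcal{G}$ contains no $d$-regular graph with $d > c$; since every part of a partition realising $\mathcal{R}_d(G)$ is an induced subgraph of $G$ and hence again lies in $\mathcal{G}$, a non-empty $\mathcal{R}_d(G)$ forces $d \le c$.

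For the planar, outerplanar, and bounded-treewidth cases, sub-claim (a) is immediate: each of these classes is closed under taking minors (treewidth cannot increase under contraction or deletion, and planarity and outerplanarity are characterised by forbidden minors), and since every induced minor is in particular a minor, each class is induced minor closed. For (b) I would invoke the Four Colour Theorem for planar graphs ($\max \chi = 4$), the $3$-colourability of outerplanar graphs ($\max \chi = 3$), and the standard perfect-elimination (tree-decomposition) colouring for treewidth-$p$ graphs ($\max \chi = p+1$). For (c), planar graphs are $5$-degenerate, outerplanar graphs are $2$-degenerate, and treewidth-$p$ graphs are $p$-degenerate, yielding the ranges $d \le 5$, $d \le 2$, and $d \le p$ respectively; the icosahedron, any cycle, and $K_{p+1}$ show these endpoints are attained.

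The chordal case is the main obstacle, because the class of chordal graphs is \emph{not} minor closed, so (a) must be argued directly and for a suitably restricted subclass. First I would show that chordal graphs are closed under edge contraction: contracting $uv$ to a vertex $w$ cannot create an induced cycle of length at least four, since such a cycle through $w$ would lift, via $u$ and/or $v$, to an induced cycle of length at least four in the original graph (using that the non-neighbours of $w$ are exactly the common non-neighbours of $u$ and $v$). Combined with closure under vertex deletion, this makes chordal graphs induced minor closed. To control $\max \chi$ I would also prove that contraction cannot increase the clique number of a chordal graph: if a maximum clique of the contracted graph contained $w$, then writing its remaining vertices as those adjacent to $u$ and those adjacent to $v$, a vertex in each part that is \emph{not} adjacent to the other endpoint would produce an induced $4$-cycle, contradicting chordality; hence one endpoint is adjacent to the whole clique, giving a clique of the same size in the original graph.

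With these two facts, I would apply Theorem~\ref{thm:general-ted} to the class $\mathcal{G}_{\omega}$ of chordal graphs with clique number at most $\omega(G)$: it is induced minor closed by the above, contains $G$, and consists of perfect graphs, so $\max_{H \in \mathcal{G}_\omega}\chi(H) = \max_{H}\omega(H) = \omega(G)$, yielding $\chi_d^=(G) \le \omega(G)$. Finally, for (c) in the chordal case, a simplicial vertex of any chordal graph has its closed neighbourhood forming a clique, so its degree is at most $\omega - 1$; thus a $d$-regular chordal graph needs $d \le \omega - 1$, which forces $\mathcal{R}_d(G)$ to be empty once $d$ exceeds this bound (the stated endpoint $d = \omega(G)$ being harmlessly covered, since $\mathcal{R}_{\omega}(G)$ is already empty). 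Assembling sub-claims (a)--(c) for all four classes completes the proof.
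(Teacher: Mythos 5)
Your proof is correct and takes essentially the same route as the paper's: instantiate Theorem~\ref{thm:general-ted} with the Four Colour Theorem, the $3$-colourability of outerplanar graphs, the $(p+1)$-colourability of treewidth-$p$ graphs, and the perfection of chordal graphs, and cut off the admissible values of $d$ by minimum-degree/degeneracy arguments in each class. The only difference is one of rigor: you explicitly prove (via the induced-$C_4$ argument) that chordal graphs are closed under contraction and that contraction cannot raise their clique number, facts the paper's proof of case (iv) asserts without justification, so your write-up fills in that gap rather than following a different approach.
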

\begin{proof}
For each case, we apply Theorem~\ref{thm:general-ted} combined with a known upper bound on $\chi(G)$.
In addition, we apply elementary results to bound the values of $d$.

To obtain (i), observe that $\chi(G) \leq 4$ by the four color theorem. Further, as each planar graph has a vertex of degree at most~5, we need not consider $d > 5$.
Similarly, for (ii), it is known that a simple outerplanar graph has $\chi(G) \leq 3$ as shown by Proskurowski and Sys{\l}o~\cite{Proskurowski1986}.
Each outerplanar graph has a vertex of degree at most~2, so we need not consider $d > 2$.
Thirdly, (iii) is proved by the fact that each graph of treewidth at most $p$ has $\chi(G) \leq p + 1$.
Moreover, each graph of treewidth at most $p$ has a vertex of degree at most $p$, so we need not consider $d > p$.
Finally, (iv) follows as the clique number of an induced minor $H$ of $G$ is at most $\omega(G)$ combined with the well-known fact that any chordal graph $G$ has $\chi(G) = \omega(G)$.
\end{proof}
\noindent While Theorem~\ref{thm:general-ted} does not give us insight into whether $\mathcal{R}_d(G)$ is non-empty, it tells us that if it is, then a solution with a small number of colors is guaranteed to exist for some structured graph classes.
In fact, we will later on show that the problem of deciding whether $\mathcal{R}_d(G)$ is non-empty is $\NP$-complete, even when $G$ is planar (see Corollary~\ref{cor:tedcol-planar-npc}).

At this point, given how similar the bounds on exact $d$-defective chromatic number and the chromatic number seem to be, it is natural to ask whether either parameter can be used to bound the other. While the two parameters coincide when $d = 0$, it can easily be seen that for $d = 2$, $\chi_2^=(C_n) = 1 < \chi(C_n)$.
However, in what is to follow, we will show that the two parameters are incomparable for all values of $d > 0$.

Let us first demonstrate that the chromatic number of a graph can be arbitrarily larger than its exact $d$-defective chromatic number, for any $d > 0$.
For this, we use the \emph{Cartesian product} of graphs $G$ and $H$, denoted as $G \Box H$. As a reminder, $V(G \Box H) = V(G) \times V(H)$ and $(u_1,v_1)(u_2,v_2) \in E(G \Box H)$ when either $u_1u_2 \in E(G)$ and $v_1 = v_2$ or $u_1 = u_2$ and $v_1v_2 \in E(H)$.

\begin{lemma}
Let $d$ and $r$ be positive integers with $r > 1$ and let $G = K_2 \Box K_{(d + 1)r}$. Then $\chi(G) = (d + 1)r$ and $\chi_d^=(G) \leq r$.
\end{lemma}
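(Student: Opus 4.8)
The plan is to first unpack the structure of $G = K_2 \Box K_m$, where I set $m = (d+1)r$ for brevity. By the definition of the Cartesian product recalled just above the statement, $G$ consists of two disjoint copies of $K_m$, one on the vertex set $\{0\}\times[m]$ and one on $\{1\}\times[m]$, together with a perfect matching joining $(0,i)$ to $(1,i)$ for every $i\in[m]$; in other words $G$ is the prism over $K_m$. With this picture the neighbors of any vertex $(b,i)$ are exactly the other $m-1$ vertices of its own clique together with its single matched partner $(1-b,i)$, which is the fact I would use throughout.

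For the chromatic number, the lower bound is immediate: each copy is a clique $K_m$, so $\omega(G)\ge m$, and since adjacency across copies forces equal indices, no clique can contain two vertices of one copy together with any vertex of the other, giving $\omega(G)=m$ and hence $\chi(G)\ge m$. For a matching upper bound I would color $(0,i)$ with color $i$ and $(1,i)$ with the cyclic successor $i+1 \pmod m$. Within each copy the assigned colors form a permutation of $[m]$ and are thus pairwise distinct, while each matching edge $(0,i)(1,i)$ is properly colored because $i\neq i+1 \pmod m$ (note $m\ge 4$). This yields $\chi(G)=m=(d+1)r$.

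The heart of the statement is $\chi_d^=(G)\le r$, for which I would exhibit an explicit exact $(r,d)$-coloring. I would partition $[m]$ into $r$ consecutive blocks $A_1,\dots,A_r$ of size $d+1$, color $(0,i)$ by the index $c$ with $i\in A_c$, and color $(1,i)$ by $c-1 \pmod r$ (so that copy-$1$ uses the same blocks but shifted cyclically by one). In copy $0$ the color class $c$ is exactly $A_c$, which induces a $K_{d+1}$, and likewise copy $1$ contributes a single block to each class, inducing another $K_{d+1}$. The only danger is a matching edge lying inside a color class, which would push a monochromatic degree up to $d+1$; but this is precisely ruled out by the shift, since $(0,i)$ receives color $c$ while $(1,i)$ receives color $c-1\neq c \pmod r$ (here is the one place the hypothesis $r>1$ is essential). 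Consequently each color class is the disjoint union of two copies of $K_{d+1}$ with no edge between them, hence $d$-regular.

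The verification I anticipate is routine and vertex-by-vertex: a vertex $(0,i)$ of color $c$ has exactly the $d$ other members of $A_c$ in its own copy as same-colored neighbors, and its matched partner carries a different color, so its monochromatic degree is exactly $d$; the symmetric count settles copy $1$. The main obstacle, and the only subtle point, is arranging the two block-partitions so that no matched pair is monochromatic while every class stays $d$-regular, and the one-block cyclic shift is the cleanest device I see to accomplish both at once.
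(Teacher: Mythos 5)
Your proof is correct and follows essentially the same route as the paper: both arguments partition each copy of $K_{(d+1)r}$ into $r$ cliques of size $d+1$ and arrange for the two copies' blocks to receive colors so that no matched pair is monochromatic, making every color class a disjoint union of $K_{d+1}$'s. The only difference is packaging --- the paper contracts the blocks and invokes Lemma~\ref{lem:chid-perf} together with the well-known values $\chi(K_2 \Box K_{(d+1)r}) = (d+1)r$ and $\chi(K_2 \Box K_r) = r$, whereas you inline those facts by exhibiting the cyclic-shift colorings explicitly, which makes your write-up self-contained but not structurally different.
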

\begin{proof}
Since $(d + 1)r \geq 4$, it is well-known that $\chi(G) = (d + 1)r$. Therefore, we need only show that $\mathcal{R}_d(G)$ contains at least one partition $H$ of $G$ into $d$-regular graphs so that $G / H$ is $r$-colorable. We can partition $K_{(d + 1)r}$ into $r$ copies of $K_{d + 1}$, which are $d$-regular graphs. We select one such partition and apply it to both copies of $K_{(d + 1)r}$ in $G$. Letting this be our $H$, we have $G / H = K_2 \Box K_r$. But since $r > 1$, $\chi(G / H) = r$, as required.
\end{proof}

From the above, we see that for every positive $d$ there is an infinite family of graphs for which the chromatic number and the exact $d$-defective chromatic number differ by a factor of $d + 1$. Moreover, since this works for arbitrarily large $r$, we can make the difference between the two factors as large as we like.

For the opposite direction, we will look at the \emph{categorical product} of graphs, $G \times H$. As before, $V(G \times H) = V(G) \times V(H)$, however $(u_1,v_1)(u_2,v_2) \in E(G \times H)$ when $u_1u_2 \in E(G)$ and $v_1v_2 \in E(H)$.

\begin{lemma}
Let $d$ and $r$ be positive integers with $r > 1$ and let $G = K_2 \times K_{(d + 1)r}$. Then $\chi(G) = 2$ and $\chi_d^=(G) = r$.
\end{lemma}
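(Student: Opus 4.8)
The plan is to first make the product concrete. Writing $V(K_2) = \{0,1\}$ and $V(K_{(d+1)r}) = [n]$ with $n = (d+1)r$, the vertices of $G$ are the pairs $(a,i)$, and by definition of the categorical product two vertices $(a,i)$ and $(b,j)$ are adjacent exactly when $a \neq b$ and $i \neq j$. Hence $G$ has no edges inside $A := \{(0,i) : i \in [n]\}$ or inside $B := \{(1,j) : j \in [n]\}$, while $(0,i)(1,j) \in E(G)$ iff $i \neq j$; that is, $G$ is the complete bipartite graph $K_{n,n}$ with a perfect matching removed. Since $n = (d+1)r \geq 4$, this graph is bipartite and has an edge, so $\chi(G) = 2$ is immediate. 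All the real work lies in establishing $\chi_d^=(G) = r$.

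For this I would first characterize the possible color classes. Fix a color class $C$ and let $S = \{i : (0,i) \in C\}$ and $T = \{j : (1,j) \in C\}$. Since the only edges of $G[C]$ run between the two sides, the degree of $(0,i)$ in $G[C]$ is $|T| - [i \in T]$ and the degree of $(1,j)$ is $|S| - [j \in S]$, where $[\,\cdot\,]$ denotes the Iverson bracket. Requiring every degree to equal $d$ forces, on each side, that the index set be either entirely contained in or entirely disjoint from the other. A short case analysis on these two alternatives then shows that, up to the degenerate cases where one of $S,T$ is empty (which fail since then the nonempty side would be $0$-regular while $d \geq 1$), exactly two configurations survive: either $S \cap T = \emptyset$ with $|S| = |T| = d$, so that $G[C] \cong K_{d,d}$; or $S = T$ with $|S| = d+1$, so that $G[C]$ is a crown graph $K_{d+1,d+1}$ minus a perfect matching. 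I expect this structural step to be the main obstacle, the crucial point being that the degree of an $A$-vertex depends on the \emph{global} set $T$, which is what rules out disconnected or ``mixed'' color classes and pins each class down to exactly one of these two shapes.

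Given the characterization, the lower bound is a counting argument. Suppose an exact $(k,d)$-coloring uses $p$ classes of the first type and $q$ of the second, so $k = p + q$. Because the classes partition the $n$ vertices of $A$, with each contributing $d$ or $d+1$ vertices to $A$, we obtain $pd + q(d+1) = n = (d+1)r$. Rearranging gives $dk + q = (d+1)r$, hence $k = r + (r - q)/d$; and since $p \geq 0$ forces $q \leq r$, we conclude $k \geq r$. Therefore $\chi_d^=(G) \geq r$.

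Finally, for the matching upper bound I would exhibit an explicit coloring attaining $q = r$ and $p = 0$: partition $[n]$ into $r$ blocks of size $d+1$, and for each block $S_\ell$ take the color class $\{(0,i),(1,i) : i \in S_\ell\}$. Each such class is a crown graph on $2(d+1)$ vertices and hence $d$-regular, and the classes partition $V(G)$, so this is a valid exact $(r,d)$-coloring. Combined with the lower bound, this yields $\chi_d^=(G) = r$.
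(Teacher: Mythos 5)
Your proof is correct, and its overall skeleton matches the paper's: both arguments characterize every color class as a copy of $K_{d,d}$ or of $K_{d+1,d+1}$ minus a perfect matching, then obtain the lower bound by counting how many vertices a single class can cover, and the upper bound from the explicit partition into $r$ crown graphs. Where you genuinely differ is in \emph{how} that characterization is established. The paper first observes (without detailed proof) that the $d$-regular induced subgraphs of $G$ are copies of $K_{d,d}$ or $K_{d+1,d+1}-M$, and then must separately rule out a color class consisting of two or more such pieces; it does this by an adjacency argument --- any two such pieces occupy at least $d$ indices in each row and hence collide when $d>1$ --- and needs a special additional argument for $d=1$, where two copies of $K_{1,1}$ genuinely can merge into a $K_{2,2}-M$. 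Your index-set computation sidesteps all of this: since the degree of $(0,i)$ in $G[C]$ is $|T|-[i\in T]$, the regularity constraint is global over the whole class, so disconnected or ``mixed'' classes are excluded automatically and the two surviving shapes fall out of one uniform case analysis, with no need to treat $d=1$ separately. Your lower-bound count over the side $A$ alone (each class contributes $d$ or $d+1$ of the $(d+1)r$ vertices of $A$) is equivalent to the paper's count over all $2(d+1)r$ vertices. On balance, your argument is tightest exactly at the point where the paper's proof is most informal, which is a worthwhile trade.
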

\begin{proof}
Similarly to the previous lemma, it is well-known that $\chi(G) = 2$. This time, we must show that any partition of $G$ into $d$-regular graphs will necessarily create at least $r$ partite sets. We observe that any $d$-regular induced graph in $G$ will either be a copy of $K_{d, d}$ or a copy of $K_{d + 1, d + 1} - M$, where $M$ is a perfect matching.
First, we will assume that $d > 1$. We claim that in this case any two of the above types of subgraphs of $G$ cannot belong to the same color class in an exact $2$-defective coloring. This can be easily seen by the fact that any two such subgraphs have to contain at least $d$ vertices from each of the two ``rows" of $G$, corresponding to the vertices of its $K_2$ factor. As such, there will always be some vertex of one subgraph adjacent to some vertex of the other, since $d > 1$.
Now, for the case where $d = 1$, it is easy to see that, if we have some copy of $K_{1, 1}$, it can potentially be combined into a color class with a unique corresponding copy of the same graph. However, the result of this will be a copy of $K_{2, 2} - M$. These subgraphs also cannot be in the same color class as any others, from the same reasoning as in the previous case.
This implies that in such a partition, each copy of $K_{d + 1, d + 1} - M$ will be its own color class, as will any copy of $K_{d, d}$ for $d > 1$. Since each such subgraph will have at most $2(d + 1)$ vertices and $G$ has $2r(d + 1)$ vertices, there must be at least $r$ color classes. Moreover, a partition into exactly $r$ copies of $K_{d + 1, d + 1} - M$ can always be found, since $r$ divides $r(d + 1)$, completing the proof.
\end{proof}

\noindent Combining the two results, we have the following theorem.
\begin{theorem}
Let $c$ and $d$ be positive integers. There exist graphs $G_1$ and $G_2$, such that $\chi_d^=(G_1) - \chi(G_1) > c$ and $\chi(G_2) - \chi_d^=(G_2) > c$.
\end{theorem}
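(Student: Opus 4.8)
The plan is to read off the theorem directly from the two lemmas just proved, since each lemma already exhibits an explicit construction realizing one of the two desired gaps; the only remaining work is to instantiate the free parameter $r$ large enough. Concretely, the categorical-product lemma will supply $G_1$ (where the exact defective chromatic number dominates) and the Cartesian-product lemma will supply $G_2$ (where the chromatic number dominates).

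For the graph $G_2$, I would take $G_2 = K_2 \Box K_{(d+1)r}$ as in the Cartesian-product lemma. That lemma gives $\chi(G_2) = (d+1)r$ and $\chi_d^=(G_2) \leq r$, so
\begin{equation*}
\chi(G_2) - \chi_d^=(G_2) \geq (d+1)r - r = dr.
\end{equation*}
Since $d \geq 1$, choosing any $r > c$ (for instance $r = c+1$, which also satisfies the standing hypothesis $r > 1$) forces $dr \geq r > c$, as required.

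For the graph $G_1$, I would take $G_1 = K_2 \times K_{(d+1)r}$ as in the categorical-product lemma, which gives $\chi(G_1) = 2$ and $\chi_d^=(G_1) = r$, so that
\begin{equation*}
\chi_d^=(G_1) - \chi(G_1) = r - 2.
\end{equation*}
Choosing $r = c + 3$ (which satisfies $r > 1$ since $c \geq 1$) yields $r - 2 = c + 1 > c$, as required.

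I do not anticipate any genuine obstacle in this last step: all of the structural content has already been absorbed into the two lemmas, where the delicate part was the categorical-product argument showing that distinct $d$-regular induced subgraphs of $K_2 \times K_{(d+1)r}$ cannot share a color class. The only points to double-check here are that the chosen values of $r$ meet the hypothesis $r > 1$ of both lemmas and that the resulting inequalities are strict, both of which hold by the elementary arithmetic above. It is worth remarking that the Cartesian-product bound is in fact one-sided ($\chi_d^=(G_2) \leq r$), but since we only need a lower bound on the gap $\chi(G_2) - \chi_d^=(G_2)$, the inequality suffices and no exact value of $\chi_d^=(G_2)$ is needed.
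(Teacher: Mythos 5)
Your proposal is correct and follows exactly the paper's approach: the paper derives the theorem immediately by combining the same two product lemmas, leaving the choice of $r$ implicit. Your explicit instantiations ($r = c+1$ for the Cartesian product and $r = c+3$ for the categorical product) and the accompanying arithmetic are valid, so nothing is missing.
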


\section{Complexity results}
\label{sec:complexity}
In this section, we look at the complexity of determining the viability of exact defective colorings. Since \probkCol is $\NP$-complete for $k \geq 3$, we will use a reduction from this problem to establish the complexity of all instances of \probtEDkCOL where $k \geq 3$.

\begin{lemma}
\label{lem:col-to-ed}
For every $k \geq 3$, \probkCol reduces in polynomial time to \probtEDkCOL for any $d \geq 1$.
\end{lemma}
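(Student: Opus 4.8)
The plan is to give a gadget reduction that converts the ``proper'' constraint of $k$-coloring (each vertex has $0$ same-colored neighbors) into the ``exact'' constraint (each vertex has exactly $d$ same-colored neighbors). Given an instance $G$ of \probkCol, I would build a graph $G'$ by attaching to every vertex $v \in V(G)$ a fresh set $A_v$ of $d$ new vertices and adding all edges inside $\{v\} \cup A_v$, so that this set induces a clique $K_{d+1}$; crucially, the vertices of $A_v$ receive no other edges, while all original edges of $G$ are kept. Since $d$ is fixed, this adds $O(d)$ vertices and $O(d^2)$ edges per vertex of $G$, so $G'$ is constructed in polynomial time. I then claim that $G$ is $k$-colorable if and only if $\chi^=_d(G') \leq k$.

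For the forward direction, given a proper $k$-coloring $c$ of $G$, I would color each vertex of $A_v$ with the color $c(v)$. The clique $\{v\} \cup A_v$ is then monochromatic and hence $d$-regular, so every gadget vertex (whose degree in $G'$ is exactly $d$) sees exactly $d$ neighbors of its own color, and $v$ sees its $d$ gadget neighbors; properness of $c$ guarantees that $v$ sees no further same-colored neighbor along the original edges. Thus the extension is an exact $(k,d)$-coloring, giving $\chi^=_d(G') \leq k$.

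The heart of the argument is the converse. Here I would use that each $a \in A_v$ has degree exactly $d$ in $G'$, so exactness forces \emph{all} $d$ of its neighbors---namely $v$ together with the remaining vertices of $A_v$---to share the color of $a$; hence the entire clique $\{v\} \cup A_v$ is monochromatic. Consequently $v$ already has exactly $d$ same-colored neighbors inside $A_v$, and exactness leaves no room for any original neighbor $u$ of $v$ to carry $v$'s color. Therefore, restricting any exact $(k,d)$-coloring of $G'$ to $V(G)$ yields a proper coloring using at most $k$ colors. The step to verify most carefully is precisely this ``budget-saturation'' phenomenon: forcing the gadget monochromatic exhausts the defect count of $v$, which is exactly why a monochromatic $K_{d+1}$ (a $d$-regular graph on $d+1$ vertices) is the correct gadget. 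By contrast, the forward direction and the polynomial running time are routine.
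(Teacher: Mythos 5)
Your proof is correct and follows essentially the same approach as the paper: the paper attaches a constant-size $d$-regular gadget $H_{v,d}$ to each vertex (explicitly suggesting $K_{d+1}$ as the instantiation you chose), and both directions of your equivalence---monochromatic extension in the forward direction, and the degree-$d$ gadget vertices forcing the clique monochromatic and saturating $v$'s defect budget in the converse---match the paper's argument step for step.
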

\begin{proof}
Let $G$ be an instance of \probkCol for any $k \geq 3$.
In polynomial time, we will create the following instance $G'$ of \probtEDkCOL for any $d \geq 1$.

Let $H_{v,d}$ be a $d$-regular graph of constant size (for instance, let $H_{v,t} \simeq K_{d+1}$).
Construct $G'$ from $G$ by identifying each $v \in V(G)$ with an arbitrary vertex of a copy of $H_{v,d}$.

For the first direction, suppose that a proper vertex-coloring $c : V \to [k]$ witnesses that $G$ is $k$-colorable.
For each $v \in V(G)$, extend $c$ by coloring each uncolored vertex in $H_{v,d}$ by $c(v)$.
As $c$ is a proper vertex-coloring, $v$ has no neighbors outside of $H_{v,d}$ colored $c(v)$.
Further, as $H_{v,d}$ is $d$-regular, it holds that $v$ has exactly $d$ neighbors colored $c(v)$. Similarly, the newly added vertices in the copies of $H_{v,d}$ must also each have $d$ neighbors of their own color.
It follows that $G'$ has an exact $(k,d)$-coloring.

For the other direction, suppose that $G'$ admits an exact $(k,d)$-coloring $c' : V(G') \to [k]$.
Since the vertices of each $H_{v,d}$ other than $v$ must have $d$ neighbours of their color and $H_{v,d}$ is $d$-regular, each $H_{v,d}$ is monochromatic.
But then each $v$ has $d$ neighbors of its own color in $H_{v,d}$ and that color does not appear on the vertices also in $G$.
Thus, $c'$ restricted to $G$ is a proper $k$-coloring of $G$ as required.
This completes the proof.
\end{proof}

This also allows us to establish a related result, where further restrictions are placed on the input graphs.

\begin{corollary}
\label{cor:tedcol-planar-npc}
For every $1 \leq d \leq 5$, \probtEDThreeCOL remains $\NP$-complete when $G$ is planar and has maximum degree $d+4$.
\end{corollary}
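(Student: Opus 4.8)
The plan is to reuse the reduction of Lemma~\ref{lem:col-to-ed} almost verbatim, but to start from a restricted class of instances and to replace the attached gadget with a planar one. First I would recall from the Preliminaries that \probThreeCol remains $\NP$-complete even when restricted to $4$-regular planar graphs~\cite{Dailey1980}, and take such a graph $G$ as the source instance. The reduction of Lemma~\ref{lem:col-to-ed} attaches to each vertex $v$ a copy of a $d$-regular graph $H_{v,d}$ of constant size by identifying $v$ with one of its vertices, and---importantly---its correctness proof uses nothing about $H_{v,d}$ beyond $d$-regularity and bounded size. So the same two directions immediately give $\chi^=_d(G') \le 3$ if and only if $G$ is $3$-colorable, for \emph{any} choice of constant-size $d$-regular gadget.

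The crux is therefore to supply, for each $d$ in the stated range, a planar $d$-regular graph of constant size to play the role of $H_{v,d}$. For $d = 1, 2, 3$ the planar graphs $K_2$, $C_3$, and $K_4$ already work. The subtlety is that the default choice $K_{d+1}$ of Lemma~\ref{lem:col-to-ed} is nonplanar for $d = 4, 5$ (it is $K_5$ and $K_6$), so for these I would instead take the octahedron (the $4$-regular planar graph on $6$ vertices) and the icosahedron (the $5$-regular planar graph on $12$ vertices). This is precisely what limits the statement to $1 \le d \le 5$: since every planar graph has a vertex of degree at most $5$, a planar $d$-regular gadget exists exactly in this range.

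Finally I would verify that $G'$ is planar and has the claimed maximum degree. Planarity follows from a standard gluing argument: fix a planar embedding of $G$, and for each $v$ draw a copy of $H_{v,d}$---with its attachment vertex on the outer face---inside a small region lying in a face of $G$ incident to $v$; since distinct gadgets occupy disjoint neighborhoods of distinct vertices, no crossings arise. For the degree bound, each identified vertex $v$ keeps its $4$ neighbors in $G$ and gains $d$ neighbors inside $H_{v,d}$, so it has degree exactly $4 + d$, while every other (gadget) vertex has degree $d$; hence $\Delta(G') = d+4$. Membership in $\NP$ is clear, and correctness is inherited from Lemma~\ref{lem:col-to-ed}. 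The only real obstacle is the planarity of the gadget for $d \in \{4,5\}$; once the octahedron and icosahedron are identified as the right replacements for $K_5$ and $K_6$, the rest is routine verification.
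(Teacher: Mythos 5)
Your proposal is correct and follows essentially the same route as the paper: reduce from \probThreeCol{} on $4$-regular planar graphs~\cite{Dailey1980} via the construction of Lemma~\ref{lem:col-to-ed}, substituting a planar $d$-regular gadget for $K_{d+1}$, and observe that every vertex of the resulting graph has degree $d$ or $d+4$. The paper merely asserts that such planar gadgets exist for $1 \leq d \leq 5$; your explicit choices (the octahedron for $d=4$, the icosahedron for $d=5$) and the embedding argument fill in details the paper leaves implicit.
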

\begin{proof}
Apply Lemma~\ref{lem:col-to-ed} but instead reduce from \probPlanarThreeCol, where the input graph is also 4-regular.
This problem is $\NP$-complete as shown by Dailey~\cite{Dailey1980}.
Finally, note that a planar $d$-regular graph $H_{v,d}$ exists for the given values of $d$.
Specifically, one can note that the degree of each vertex is either $d$ or $d+4$, so the proof follows.
\end{proof}
\noindent In particular, this result explains why Theorem~\ref{thm:general-ted} is the best we can hope for in terms of identifying $\mathcal{R}_d(G)$, in the sense that there is no polynomial time algorithm for deciding whether $\mathcal{R}_d(G)$ is non-empty, unless $\P = \NP$.

It can be easily seen that the reduction used in Lemma~\ref{lem:col-to-ed} will not work for the cases where $k = 2$. However, since the complexity of one such case ($d = 1$) is already known, we can use an inductive argument to cover others.

\begin{theorem}[Schaefer~\cite{Schaefer1978}]
\label{thm:1ed2col-npc}
\probOneEDTwoCOL is $\NP$-complete.
\end{theorem}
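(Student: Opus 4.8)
\emph{Membership.} That \probOneEDTwoCOL lies in $\NP$ is immediate: a candidate $2$-coloring is a certificate of size $O(|V(G)|)$, and one checks in linear time that every vertex has exactly one neighbor of its own color. The real content is $\NP$-hardness, and my plan is a gadget reduction from \probMonotoneNaeThreeSAT, which Schaefer's dichotomy establishes to be $\NP$-complete. This source is natural for two reasons. First, exact $(2,1)$-colorings are symmetric under swapping the two colors, which mirrors the fact that the satisfying assignments of a monotone not-all-equal instance are closed under global complementation; mapping \textsc{true}/\textsc{false} to the two colors is therefore consistent with the symmetry on both sides. Second, by Lemma~\ref{lem:chi1-perf} an exact $(2,1)$-coloring is equivalent to a perfect matching $M$ together with a proper $2$-coloring of $G/M$, so I may reason about a solution as a choice of monochromatic matching plus a bipartition of the contracted graph, and the ``not all equal'' requirement then translates cleanly into a bipartiteness-after-contraction constraint at each clause.

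\emph{Variable and consistency gadgets.} For each variable I would use an even-cycle gadget exploiting the earlier analysis of $\chi^=_1(C_n)$: a cycle $C_{4k}$ has exactly two exact $(2,1)$-colorings up to the global color swap, one for each of its two perfect matchings, so the gadget carries a single bit of freedom that I read as the variable's truth value. To force every occurrence of a variable to agree, I would route one long cycle (or a tree of such cycles joined by small connectors) through designated \emph{occurrence vertices}, so that fixing the phase of the cycle in one place propagates rigidly to all occurrences; the binary phase is exactly the variable's value. Each connector must be engineered so that its own vertices are saturated by precisely one monochromatic edge regardless of which of the two admissible phases the cycle takes.

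\emph{Clause gadgets, correctness, and the main obstacle.} For each clause I would attach a small gadget to the three occurrence vertices feeding it, designed so that the gadget admits a valid local completion if and only if the three incident vertices are not all assigned the same color. The forward direction is then a matter of coloring every gadget in the state dictated by a fixed not-all-equal assignment and checking saturation vertex by vertex; the backward direction argues that any exact $(2,1)$-coloring of the whole construction forces each variable cycle into one of its two canonical phases, which reads off an assignment, while colorability of every clause gadget certifies the not-all-equal condition. The hard part is the rigidity inherent in ``exactly one same-colored neighbor'': in contrast to proper or merely defective colorings, here \emph{every} vertex, including all auxiliary gadget vertices, must be met by exactly one monochromatic edge and never by zero or two. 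Each gadget must therefore be verified to (i) possess a completion for every intended boundary condition, (ii) possess none for the forbidden all-equal boundary, and (iii) admit no stray internal matching that would silently decouple it from its interface; and wherever two gadgets share a vertex, the interface must be arranged so that the shared vertex still has exactly one same-colored neighbor globally. Carrying out this exhaustive local case analysis, and in particular ruling out unintended completions, is where essentially all the work of the proof resides.
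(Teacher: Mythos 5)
First, note that the paper does not prove this theorem at all: it is Schaefer's result, stated with attribution precisely because it is imported as a black box from~\cite{Schaefer1978} (where in fact the stronger statement is shown, namely that exact $(2,1)$-colorability is $\NP$-complete even for planar cubic graphs). So there is no in-paper proof to match your attempt against; the only question is whether your argument stands on its own as a proof of $\NP$-hardness of \probOneEDTwoCOL.

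It does not, because it is a plan rather than a proof. The membership argument and the choice of \probMonotoneNaeThreeSAT as source problem are reasonable (the paper itself reduces from that problem for the $(2,2)$ case in Lemma~\ref{2ed2col}), and your analysis of the variable cycles $C_{4k}$ \emph{in isolation} is correct. But the clause gadget --- the entire content of the reduction --- is never exhibited: you list the three properties it must satisfy and then concede that ``essentially all the work of the proof resides'' in verifying them. A reduction whose gadgets are unspecified cannot be checked, and the deferred work is exactly where such constructions tend to fail. Concretely: once a clause gadget is attached to an occurrence vertex, that vertex no longer has degree $2$, so the two-phase rigidity you derived for $C_{4k}$ (which presupposes that every cycle vertex has both of its neighbors on the cycle) no longer applies automatically; the unique monochromatic edge saturating an occurrence vertex may now go into the clause gadget rather than along the cycle, and the phase-propagation argument collapses unless the connector and gadget designs explicitly rule this out. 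You flag this danger yourself (``admit no stray internal matching,'' ``the shared vertex still has exactly one same-colored neighbor globally'') but never resolve it. Until a concrete clause gadget is written down and your conditions (i)--(iii) are actually verified, this is an outline with a genuine gap, not a proof.
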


We can use this as a starting point, to prove $\NP$-completeness for a number of subproblems of this family.

\begin{lemma}
\label{ed2col-induction}
If \probtEDTwoCOL is $\NP$-complete, then \probtplustwoEDTwoCOL is  $\NP$-complete.
\end{lemma}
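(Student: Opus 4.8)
The plan is to give a polynomial-time many-one reduction from \probtEDTwoCOL to \probtplustwoEDTwoCOL, under the hypothesis that the former is $\NP$-complete. Since \probtplustwoEDTwoCOL is clearly in $\NP$ (a candidate $2$-coloring can be verified in polynomial time by checking that every vertex has exactly $d+2$ neighbors of its own color), it suffices to establish $\NP$-hardness. The whole difficulty lies in engineering a local gadget that raises the ``same-colored degree'' demand of each original vertex by \emph{exactly} two, without otherwise constraining the coloring.

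For the gadget, I would take a copy of $K_{d+3}$, which is $(d+2)$-regular, delete a single edge $b_1b_2$, and attach both of its endpoints $b_1,b_2$ to the vertex $v$ we wish to augment; call the resulting graph the \emph{$v$-gadget}. Given an instance $G$ of \probtEDTwoCOL, I build $G'$ by attaching a fresh copy of this gadget to every $v\in V(G)$, identifying $v$ with the attachment point. A quick degree count shows that in $G'$ every gadget vertex other than $v$ has degree exactly $d+2$: the endpoints $b_1,b_2$ have $d+1$ neighbors inside the former clique plus the edge to $v$, while each of the remaining $d+1$ clique vertices keeps its full degree $d+2$. Moreover, $\deg_{G'}(v)=\deg_G(v)+2$ and $|V(G')|=(d+4)\,|V(G)|$, so the construction is polynomial (indeed linear) for each fixed $d$.

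The heart of the argument, and the step I expect to be the main obstacle, is to show that in \emph{any} exact $(2,d+2)$-coloring of $G'$ each gadget is monochromatic in the color of its attachment vertex. I would argue this by a forcing/propagation argument: every gadget vertex $w\neq v$ has degree exactly $d+2$, and since it must have exactly $d+2$ same-colored neighbors, \emph{all} of its neighbors receive $c(w)$; in particular $b_1$ and $b_2$ force $v$ into their color. As $K_{d+3}-b_1b_2$ is connected, this propagates to make the whole gadget monochromatic and equal to $c(v)$, so $v$ acquires exactly two same-colored neighbors ($b_1,b_2$) from the gadget while every gadget vertex is simultaneously satisfied. Consequently $v$ must see exactly $(d+2)-2=d$ same-colored neighbors among its original neighbors, so the restriction of the coloring to $V(G)$ is an exact $(2,d)$-coloring of $G$. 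For the converse, any exact $(2,d)$-coloring of $G$ extends to an exact $(2,d+2)$-coloring of $G'$ by coloring each gadget monochromatically with $c(v)$, which the same bookkeeping verifies. Combining the two directions yields the equivalence, and hence the claimed hardness.
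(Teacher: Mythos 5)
Your proposal is correct and uses exactly the same construction as the paper: the gadget you describe ($K_{d+3}$ minus an edge $b_1b_2$, with $b_1,b_2$ joined to $v$) is precisely the paper's subdivided-clique gadget, and both directions of the equivalence are argued the same way. Your degree-forcing propagation argument for why every gadget must be monochromatic is in fact spelled out more explicitly than in the paper, which simply appeals to the ``internal regularity'' of the gadgets.
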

\begin{proof}
Given an instance $G$ of \probtEDTwoCOL, we construct an instance $G'$ of \probtplustwoEDTwoCOL in polynomial-time as follows. 
For each vertex $v_i$ in $G$, we create a $(d+3)$-clique $H_i$. We pick an arbitrary edge in $H_i$ and subdivide it. We then identify $v_i$ with the vertex created by the subdivision (an alternative way of describing this is to delete an arbitrary edge in $H_i$ and then connect its previous endpoints to $v_i$).

Given an exact $(2,d)$-coloring of $G$, we extend it to an exact $(2,d+2)$-coloring of $G'$ by giving to every vertex in each $H_i$ the same color as $v_i$. Each vertex in each $H_i$ has exactly $d+2$ neighbors, all of its own color. Furthermore, each $v_i$ has the same neighbors as before (including $d$ of its own color) and two additional new neighbors of its color in its copy of $H_i$.

On the other hand, if we have an exact $(2,d+2)$-coloring of $G'$, due to the internal regularity of the $H_i$, they are all forced to be monochromatic and each $v_i$ must have the same color as its $H_i$. Then, by restricting the coloring to $G$, each $v_i$ loses exactly two neighbors of its own color, which means that the resulting coloring is an exact $(2,d)$-coloring.
\end{proof}

\begin{figure}[t]
\centering
\includegraphics[scale=1,keepaspectratio]{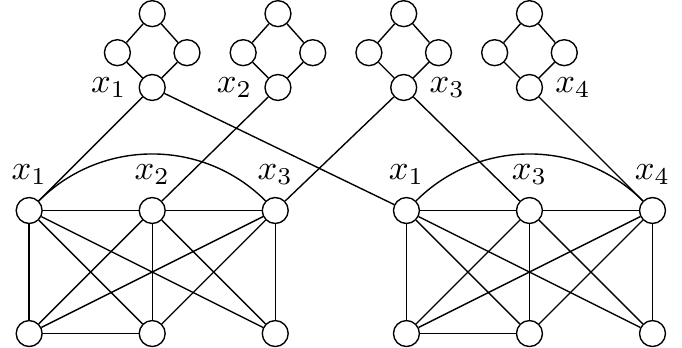} 
\caption{An instance $(x_1 \vee x_2 \vee x_3) \wedge (x_1 \vee x_3 \vee x_4)$ of \probMonotoneNaeThreeSAT transformed into a graph.}
\label{fig:ted-2k2d-reduction}
\end{figure}

Due to the nature of the inductive argument we used, we need a second base case to deal with even values of $d$. 

\begin{lemma}
\label{2ed2col}
\probTwoEDTwoCOL is $\NP$-complete.
\end{lemma}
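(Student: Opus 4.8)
The plan is to reduce from \probMonotoneNaeThreeSAT, which is $\NP$-complete; membership of \probTwoEDTwoCOL in $\NP$ is immediate, since a candidate $2$-coloring is verified in polynomial time by checking that every vertex has exactly two neighbors of its own color. Two facts drive the construction. First, an exact $(2,2)$-coloring is precisely a partition of $V(G)$ into two classes, each inducing a disjoint union of cycles. Second, a vertex of degree $2$ forces both of its neighbors to take its own color (it has only two neighbors and needs exactly two of its color). Reading color $1$ as \emph{true} and color $2$ as \emph{false}, I would build $G$ from variable gadgets and clause gadgets so that $G$ admits an exact $(2,2)$-coloring if and only if the formula has a not-all-equal satisfying assignment.

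For the variable gadget I would, for each variable $x_i$ occurring $t_i$ times, create a cycle that alternates between $t_i$ \emph{occurrence vertices} $o^i_1,\dots,o^i_{t_i}$ and $t_i$ auxiliary \emph{link vertices}, where each link vertex has degree exactly $2$, its only neighbors being the two occurrence vertices flanking it on the cycle. Since a degree-$2$ vertex forces its two neighbors to its own color, the link vertices propagate a single color around the whole cycle; hence the entire variable cycle is forced monochromatic, which pins all occurrences of $x_i$ to one color. This is the mechanism enforcing consistency of the truth value across occurrences. Note also that each occurrence vertex $o^i_j$ already receives its two same-colored neighbors from the cycle, so every additional (clause-side) neighbor of $o^i_j$ is forced to the opposite color.

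The heart of the construction is the clause gadget, which must encode the not-all-equal condition. Its core is a single \emph{clause vertex} $z$ whose neighborhood consists of exactly three vertices carrying the three literal colors of the clause. The key observation is that such a $z$ satisfies the exact-$(2,2)$ condition exactly when its three neighbors are \emph{not} monochromatic: if all three share a color, then $z$ has either three same-colored neighbors (if $z$ takes that color) or none (if it does not), both illegal; whereas if exactly two of them agree, $z$ may take the majority color and then has precisely two same-colored neighbors. Thus $z$ is colorable precisely when the clause is not-all-equal satisfied, and I would attach $z$, through literal vertices that read off the occurrence colors (via short equality or negation links), to the three occurrences of the clause.

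The main obstacle is reconciling the degree budgets: the edges from $z$ to the literal vertices change how many same-colored neighbors those literal vertices see, and a literal vertex behaves differently depending on whether its clause is satisfied through it (it lies in the majority, so $z$ matches it) or not (it lies in the minority, so $z$ is opposite). To keep every literal vertex at exactly two same-colored neighbors in all legal configurations, I would equip each with a small auxiliary gadget that supplies the missing same-colored neighbors in either case; designing these auxiliary gadgets so that they admit the required colorings and no others, while introducing no unintended global colorings, is the delicate part, and is precisely what the explicit construction in Figure~\ref{fig:ted-2k2d-reduction} realizes. With the gadgets in place, correctness follows in both directions: a not-all-equal assignment colors each variable cycle by its truth value and each clause vertex by its majority, yielding an exact $(2,2)$-coloring; conversely, any exact $(2,2)$-coloring forces each variable cycle monochromatic and forces every clause vertex's neighbors to be non-monochromatic, so reading the variable colors as truth values gives a not-all-equal satisfying assignment. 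By the color-swap and terminal symmetry of the gadgets, it suffices to verify the all-equal case (infeasible) and one two-versus-one split (feasible), which keeps the case analysis bounded.
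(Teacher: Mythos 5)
Your overall frame coincides with the paper's: the same source problem (\probMonotoneNaeThreeSAT), a variable gadget forced monochromatic by degree-2 vertices (the paper uses a $C_4$ per literal), and a degree-3 ``checker'' vertex that is colorable precisely when its three neighbors are not monochromatic (in the paper this is the $K_1$ inside the clause gadget $H = K_3 + (K_2 \cup K_1)$). However, your proof has a genuine gap at exactly the point you yourself flag as ``the delicate part'': the auxiliary gadgets that repair the literal vertices' degree budgets are never constructed, and deferring to ``the explicit construction in Figure~\ref{fig:ted-2k2d-reduction}'' is circular in a blind proof --- that figure \emph{is} the construction the lemma requires. Without those gadgets there is no reduction whose two directions can be verified, so the correctness paragraph at the end is unsupported.

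Worse, the repair you sketch --- a private auxiliary gadget attached to each literal vertex $\ell$ --- provably cannot work. In an exact $(2,2)$-coloring every color class induces a disjoint union of cycles. If a gadget $A$ meets the rest of the graph only at $\ell$ (i.e., $\ell$ is a cut vertex separating $A$ from everything else), then the monochromatic cycle through $\ell$ either avoids $A$ or must enter and leave $A$ through $\ell$; hence $A$ contributes either $0$ or $2$ same-colored neighbors to $\ell$, never exactly $1$. But a majority literal vertex (one matched by your clause vertex $z$) needs exactly $1$ further same-colored neighbor, so no pendant gadget can balance its budget. The flexibility must instead come from structure \emph{shared} among the three literal vertices of a clause, which is precisely how the paper's gadget works: the three literal vertices form a triangle and are jointly adjacent to a shared $K_1$ and $K_2$, so in a two-versus-one split the majority pair together with the $K_1$ forms one monochromatic triangle and the minority vertex together with the $K_2$ forms the other, giving every gadget vertex exactly two same-colored neighbors. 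Identifying (or inventing) such a shared gadget is the real content of the lemma, and it is missing from your proposal.
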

\begin{proof}
We will show this by a polynomial-time reduction from \probMonotoneNaeThreeSAT. In this problem we are given a boolean formula in 3-CNF with only positive literals. The goal is to decide whether or not there is a satisfying assignment where each clause has at least one false literal (in addition to at least one true literal). This problem is $\NP$-complete by Schaefer~\cite{Schaefer1978}.
Our reduction is as follows: Let $H = K_3 + (K_2 \cup K_1)$, i.e., the join of the disjoint union of a vertex and an edge with a 3-clique, or equivalently a $6$-clique missing two incident edges. For each variable clause $C_j$, we create a copy $H_j$ of $H$, and label the vertices of the $K_3$ (in the join description) with the literals of $C_j$. Additionally, for each literal $x_i$ in the formula, we create a copy of $C_4$ and select one of its vertices to also label with $x_i$. We then connect the labeled vertices from each copy of $H$ to the labeled vertex from the corresponding copy of $C_4$ whose label they share. The construction is illustrated in Figure~\ref{fig:ted-2k2d-reduction}. We claim that the above graph admits an exact $(2, 2)$-coloring if and only if the given formula is satisfiable as stipulated.

Suppose that a satisfying assignment exists. For every literal which is true, we will color its corresponding copy of $C_4$ with color~$1$. Similarly, false literals will have their $C_4$ colored with color~$0$. Since these $C_4$ are internally $2$-regular, we will color each labeled vertex in a copy of $H$ with the opposite color of the corresponding $C_4$. At this point, each copy of $H$ will have two vertices of one color and one of the other, depending on how many true literals the corresponding clause has (which must be either one or two). Referring to the join definition of $H$, we can color the $K_1$ with the same color as two of the already colored vertices and the $K_2$ with the opposite color. This creates two monochromatic $3$-cycles in each copy of $H$, thus giving us the required coloring.

For the other direction, suppose that the given graph has an exact $(2, 2)$-coloring, using colors $0$ and $1$. Consider the $K_1$ in each copy of $H$. Since it has three neighbors (the labeled vertices in its copy), exactly two of those must share its color. Since these two vertices are also adjacent, they complete a monochromatic 3-cycle, so all their neighbors must be of the opposite color. This includes all three other vertices in the copy of $H$ in question, which also form a $3$-cycle. As such, every labeled vertex in a copy of $H$ must have the opposite color from the corresponding labeled vertex in a copy of $C_4$. Finally, the copies of $C_4$ must all be monochromatic. We assign a truth value to a given literal according to the color its copy of $C_4$ has (note that the choice of which color will correspond to ``true" and which to ``false" does not matter, as long as the assignment is consistent). Since there will be two labeled vertices of one color and one of the opposite color in each copy of $H$, each clause will have at least one true and at least one false literal, as required.
\end{proof}

We note here that copies of $C_3$ could have been used, instead of $C_4$. We decided to use $C_4$ to avoid confusion in the proof without increasing its length by making the construction's labels more explicit.

By combining the results of this section, we are now ready to give our main result on the complexity of exact defective coloring.

\begin{theorem}
\label{edcolfull}
\probtEDkCOL is $\NP$-complete for all $d \geq 1$ and $k \geq 2$.
\end{theorem}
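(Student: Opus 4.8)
The plan is to assemble the pieces already established in this section, so the work reduces to (a) confirming membership in $\NP$ and (b) a short case split on $k$ that invokes the earlier lemmas. For membership, I would observe that given a candidate coloring $c : V(G) \to [k]$, one can in polynomial time scan every vertex $v$ and count how many of its neighbors receive color $c(v)$, accepting exactly when this count equals $d$ for all $v$; hence \probtEDkCOL is in $\NP$ for every fixed $k$ and $d$. It then remains to prove $\NP$-hardness for each pair $(k,d)$ with $d \geq 1$ and $k \geq 2$, which I would organize according to whether $k \geq 3$ or $k = 2$.

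For the case $k \geq 3$ (and any $d \geq 1$), I would appeal directly to Lemma~\ref{lem:col-to-ed}, which gives a polynomial-time reduction from \probkCol to \probtEDkCOL. Since \probkCol is $\NP$-complete for every $k \geq 3$, this immediately yields $\NP$-hardness of \probtEDkCOL for all such $k$ and all $d \geq 1$. This single reduction disposes of the entire regime $k \geq 3$ uniformly.

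For the remaining case $k = 2$, I would argue by induction on $d$ with step size two, using the two base cases and the induction lemma already proved. Concretely, Theorem~\ref{thm:1ed2col-npc} (Schaefer) establishes that \probOneEDTwoCOL is $\NP$-complete, handling the odd base case $d = 1$, while Lemma~\ref{2ed2col} establishes that \probTwoEDTwoCOL is $\NP$-complete, handling the even base case $d = 2$. Lemma~\ref{ed2col-induction} provides the inductive step: if \probtEDTwoCOL is $\NP$-complete then so is \probtplustwoEDTwoCOL. Starting from $d = 1$ and repeatedly applying the step covers all odd $d$; starting from $d = 2$ and applying the step covers all even $d$. Together these chains certify $\NP$-hardness of \probtEDTwoCOL for every $d \geq 1$.

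Combining the two cases with the $\NP$ membership observation yields the claim for all $d \geq 1$ and $k \geq 2$. I do not expect a genuine obstacle here, since every nontrivial ingredient—the coloring reduction of Lemma~\ref{lem:col-to-ed}, the two base cases, and the doubling induction of Lemma~\ref{ed2col-induction}—has already been established; the only point deserving care is making the induction on $d$ for $k = 2$ explicit, namely verifying that the two parity chains seeded at $d = 1$ and $d = 2$ jointly exhaust all positive integers $d$.
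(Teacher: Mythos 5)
Your proposal is correct and follows essentially the same route as the paper, whose proof is a one-line citation of exactly the ingredients you assemble: Lemma~\ref{lem:col-to-ed} for $k \geq 3$, and for $k = 2$ the two base cases (Theorem~\ref{thm:1ed2col-npc} for $d=1$, Lemma~\ref{2ed2col} for $d=2$) combined with the step-two induction of Lemma~\ref{ed2col-induction}. Your write-up merely makes explicit the NP-membership check and the parity argument that the two chains seeded at $d=1$ and $d=2$ exhaust all $d \geq 1$, which the paper leaves implicit.
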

\begin{proof}
This follows from Theorem~\ref{thm:1ed2col-npc} and lemmata~\ref{lem:col-to-ed},~\ref{2ed2col}, and~\ref{ed2col-induction}.
\end{proof}

We also observe that for $k = 1$, the problem \probtEDkCOL reduces to the question of whether a graph is $d$-regular, which is easily solvable in polynomial time. This establishes the computational complexity for all possible values of $d$ and $k$ for \probtEDkCOL.

\section{Algorithms for exact $d$-defective coloring structured graphs}
\label{sec:algos}
In this section, we give positive algorithmic results for finding exact $(k,d)$-colorings of certain structured graphs.
We begin by giving an efficient algorithm for deciding whether a given graph of bounded treewidth admits an exact $(k,d)$-coloring by exploiting known properties of such graphs.
While the metatheorems we apply are powerful as tools for classifying certain problems as admitting efficient solutions, the resulting algorithms tend not to reveal much about the combinatorial structure of these problems.
For cactus graphs, which have treewidth at most two and generalize cycles, we give a direct combinatorial algorithm that does not rely on any metatheorem.
Finally, we also give combinatorial algorithms for block graphs which, in turn, have bounded cliquewidth.

\subsection{Graphs of bounded treewidth}
Given that we have determined the exact $d$-defective chromatic number for e.g., trees and cycles, it is interesting to consider the question for more general sparse or ``tree-like'' graphs. Such graphs are captured by the notion of treewidth, measuring the distance of the given graph to a tree.
We begin with the following MSO$_1$ formulation of the problem of finding an exact $(k,d)$-coloring.

\begin{lemma}
\label{lem:mso}
For every $k,d \in \mathbb{N}$ there exists a \emph{MSO}$_1$ formula $\phi_{k,d}$ such that for every graph $G$, it holds that $G\models \phi_{k,d}$ iff $G$ is a YES-instance of \probtEDkCOL.
\end{lemma}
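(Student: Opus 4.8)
The plan is to write $\phi_{k,d}$ explicitly as an existential sentence over $k$ vertex sets encoding the colour classes, followed by a first-order part that enforces the exact-degree condition. First I would introduce set variables $X_1,\dots,X_k$, intended to be the $k$ colour classes, and assert that they form a partition of the vertex set. Since $k$ is a fixed constant, both the covering condition $\forall x\, \bigvee_{i=1}^{k} (x\in X_i)$ and the pairwise-disjointness condition $\forall x\, \bigwedge_{i<j} \neg\big(x\in X_i \wedge x\in X_j\big)$ are MSO$_1$ formulas of constant length; call their conjunction $\mathrm{Part}(X_1,\dots,X_k)$.

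The key step is expressing, for a vertex $x$ placed in a class $X_i$, that $x$ has \emph{exactly} $d$ neighbours inside $X_i$. Because $d$ is a fixed constant this is a first-order formula of constant size: I would assert the existence of $d$ pairwise distinct vertices $y_1,\dots,y_d$, each adjacent to $x$ (via the atom $E(x,y_\ell)$) and each lying in $X_i$, together with a universally quantified clause $\forall z\,\big((E(x,z)\wedge z\in X_i)\rightarrow \bigvee_{\ell=1}^{d} z=y_\ell\big)$. The witnesses guarantee \emph{at least} $d$ same-coloured neighbours, while the universal clause forces \emph{at most} $d$, so together they pin the count to exactly $d$. Write $\mathrm{Exact}_d(x,X_i)$ for this subformula. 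The full sentence then has the shape
\begin{equation*}
\phi_{k,d} \;=\; \exists X_1 \cdots \exists X_k\,\Big( \mathrm{Part}(X_1,\dots,X_k) \,\wedge\, \forall x\, \bigwedge_{i=1}^{k} \big( x\in X_i \rightarrow \mathrm{Exact}_d(x,X_i) \big) \Big).
\end{equation*}
Correctness is a direct unwinding of the semantics: a satisfying assignment of the $X_i$ corresponds exactly to a colouring $c$ with $c(v)=i$ iff $v\in X_i$, and the inner first-order part holds precisely when every vertex sees exactly $d$ neighbours of its own colour, i.e.\ when $c$ is an exact $(k,d)$-coloring. Since all quantification is over vertices and vertex sets and the only relational atom used is adjacency $E(\cdot,\cdot)$, $\phi_{k,d}$ is a genuine MSO$_1$ sentence, and for each fixed pair $(k,d)$ it is a single fixed sentence, as required.

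The main obstacle I anticipate is not conceptual but lies in getting the ``exactly $d$'' gadget syntactically faithful: one must ensure the $y_\ell$ are forced to be genuinely distinct and genuinely neighbours of $x$ within $X_i$, and that the ``no further neighbour'' clause ranges over all of $X_i$ rather than merely over the named witnesses. Everything else is bookkeeping over the constant index ranges. Finally, I would remark that this MSO$_1$ formulation is what later yields the algorithmic payoff: since MSO$_1\subseteq$ MSO$_2$, Courcelle's theorem (Theorem~\ref{thm:msotreewidth}) applies to graphs of bounded treewidth, and Theorem~\ref{fact:msorankwidth} additionally covers the bounded-cliquewidth case.
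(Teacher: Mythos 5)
Your proposal is correct and follows essentially the same route as the paper: existentially quantify $k$ vertex-set variables, assert that they partition $V$, and use the fact that $d$ is a fixed constant to express the same-colour neighbour count in first-order logic. The only (immaterial) difference is the counting gadget: you pin down ``exactly $d$'' with $d$ pairwise-distinct witnesses plus a universal ``no further neighbour'' clause, whereas the paper phrases it as ``at least $d$ and not at least $d+1$'' via an auxiliary predicate --- and your version is, if anything, slightly more careful, since it makes the distinctness of the witnesses explicit.
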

\begin{proof}
Our goal is to partition the vertices of $G=(V,E)$ into $k$ color classes $C_1, C_2, \ldots, C_k$ such that if a vertex $v$ is colored $i$ (i.e., $v \in C_i$), then there are exactly $d$ neighbors of $v$ that are also colored $i$. To achieve this, let us consider the MSO$_1$ formula
\begin{equation*}
\begin{split}
\psi_k 	&\coloneqq \exists C_1,\ldots,C_k \subseteq V \Big( \forall v \in V \Big( v \in C_1 \vee \cdots \vee v \in C_k \Big) \Big) \\
	&\wedge \Big( \forall i,j\in [k], i\neq j: (C_i \cap C_j = \emptyset) \Big) \\
	&\wedge \forall v \in V \Big( \bigwedge_{1 \leq i \leq k} ((v \in C_i) \implies \colnei(v,V,d,C_i) \wedge \neg \colnei(v,V,d+1,C_i)) \Big),
\end{split}
\end{equation*}
where the auxiliary predicate is defined as
\begin{equation*}
\begin{split}
\colnei(v,V,d,C) &\coloneqq \exists u_1,\ldots,u_d \subseteq V \Big( \forall j \in [d]: \adj(u_j, v) \wedge u_j \in C \Big).
\end{split}
\end{equation*}
Here, we require the desired partition into $k$ color classes to exist, and additionally stipulate that each vertex, when it belongs to color class $C_i$, has exactly $d$ vertices adjacent to it that are also in $C_i$.
The latter is established by the auxiliary predicate which expresses that $v$ which is in $C_i$ has $d$, but not $d+1$, neighbors that are also in $C_i$.
\end{proof}

\begin{lemma}
\label{lem:tw-poly}
Let $k,d,p \in \mathbb{N}$ be fixed. Then \probtEDkCOL can be solved in time $O(n)$ on $n$-vertex graphs of treewidth at most $p$. Furthermore, the problem can be solved in time $O(n^3)$ on $n$-vertex graphs of cliquewidth at most $p$.
\end{lemma}
\begin{proof}
The proof follows from Lemma~\ref{lem:mso} in conjunction with Theorem~\ref{thm:msotreewidth} and Theorem~\ref{fact:msorankwidth}.
\end{proof}
In other words, this result states that \probtEDkCOL is FPT parameterized by both the number of colors $k$ \textit{and} treewidth $p$. However, by Corollary~\ref{cor:explicit-ted-bounds}, it suffices to execute the algorithm of Lemma~\ref{lem:tw-poly} at most $p+1$ times. We arrive at the following stronger result.
\begin{theorem}
\probtEDkCOL is FPT parameterized by treewidth.
\end{theorem}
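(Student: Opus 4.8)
The plan is to combine the fixed-parameter tractability in the two parameters $(k, p)$ established in Lemma~\ref{lem:tw-poly} with the structural upper bound on $\chi_d^=(G)$ from Corollary~\ref{cor:explicit-ted-bounds}(iii). The key observation is that when we parameterize only by treewidth $p$, we no longer have $k$ as a free parameter to plug into the metatheorem; instead we must bound the number of colors we ever need to try in terms of $p$ alone. The corollary supplies exactly this: a graph $G$ of treewidth at most $p$ satisfies $\chi_d^=(G) \leq p+1$ whenever $\mathcal{R}_d(G)$ is non-empty and $1 \leq d \leq p$, and otherwise $\chi_d^=(G) = \infty$.

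First I would reduce to the relevant range of $d$. Since every graph of treewidth at most $p$ contains a vertex of degree at most $p$, the Proposition at the start of Section~\ref{sec:basic} gives $\chi_d^=(G) = \infty$ for every $d > p$; in that regime the answer to \probtEDkCOL is trivially NO for any finite $k$, which we can report immediately. Likewise, the case $d = 0$ is ordinary proper coloring, solvable in linear time on bounded-treewidth graphs, so we may assume $1 \leq d \leq p$. This confines the defect parameter to a finite range depending only on $p$.

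Next I would dispose of the instances where $k$ exceeds the guaranteed bound. If the input asks whether $\chi_d^=(G) \leq k$ with $k \geq p+1$, then by Corollary~\ref{cor:explicit-ted-bounds}(iii) it suffices to decide whether $\mathcal{R}_d(G)$ is non-empty, equivalently whether $\chi_d^=(G) \leq p+1$; so we may cap $k$ at $p+1$ without changing the answer. Thus for any input we only ever need to decide \probtEDkCOL for values $k \in \{1, 2, \ldots, p+1\}$. For each such $k$ we invoke the linear-time algorithm of Lemma~\ref{lem:tw-poly}, whose running time is $O(n)$ with a constant hidden factor depending on the fixed parameters $k, d, p$. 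Running it for the at most $p+1$ candidate values of $k$ (and the at most $p$ relevant values of $d$ are already fixed by the input) gives total running time $f(p) \cdot n$ for a computable function $f$ that absorbs the $(p+1)$-fold repetition and the parameter-dependent constants from Courcelle's theorem, which is precisely the form required for FPT parameterized by $p$ alone.

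The only subtle point, which I would state carefully, is that the hidden constant in Lemma~\ref{lem:tw-poly} genuinely depends on $k$, $d$, and $p$ through the size of the MSO$_1$ sentence $\phi_{k,d}$ and the treewidth bound; the content of the present theorem is that, after capping $k \leq p+1$ and $d \leq p$, all three parameters collapse to functions of $p$, so the overall dependence is a single computable function $f(p)$. No deep new argument is needed beyond assembling these facts, so there is no real obstacle; the main care is in confirming that Corollary~\ref{cor:explicit-ted-bounds} lets us cap $k$ correctly and that the infinite and $d=0$ edge cases are handled before invoking the metatheorem.
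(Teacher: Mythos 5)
Your proposal is correct and matches the paper's argument: the paper likewise combines Lemma~\ref{lem:tw-poly} with Corollary~\ref{cor:explicit-ted-bounds} to cap the number of colors at $p+1$, so that the MSO-based algorithm needs to be invoked at most $p+1$ times with all parameters bounded by functions of the treewidth. Your write-up simply spells out the edge cases ($d > p$, $d = 0$, capping $k$) that the paper leaves implicit.
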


\subsection{Cactus graphs}

Since each cactus graph $G$ has a vertex of degree at most two, ${\chi_d^=(G) = \infty}$ for any $d > 2$. Thus, for any cactus $G$, it suffices to consider $\chi_d^=(G)$ for $d \in \{1,2\}$.

For completeness, we write out explicitly the result for $d=1$.
\begin{proposition}
A cactus graph $G$ has $\chi_1^=(G) \leq 3$ if and only if $G$ has a perfect matching. Moreover, this bound is tight.
\end{proposition}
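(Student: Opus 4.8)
The plan is to obtain the stated equivalence in one stroke from Corollary~\ref{cor:explicit-ted-bounds}(ii), exploiting the fact that cactus graphs are outerplanar, and then to read off tightness from the cycle computation already carried out.

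The one fact that must be established to connect the two formulations is that, for $d = 1$, the collection $\mathcal{R}_1(G)$ is non-empty precisely when $G$ has a perfect matching. This is because a $1$-regular induced subgraph is exactly a disjoint union of edges: any partition of $V$ into such subgraphs therefore covers every vertex by exactly one matching edge, yielding a perfect matching, while conversely a perfect matching $M$ gives the partition of $V$ into the two-vertex parts $\{x,y\}$ for $xy \in M$, each of which induces a $1$-regular subgraph. Hence $\mathcal{R}_1(G) \neq \emptyset$ iff $G$ has a perfect matching.

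Granting this, I would argue as follows. Since $G$ is a cactus it is outerplanar, and $d = 1$ lies in the admissible range $1 \leq d \leq 2$ of Corollary~\ref{cor:explicit-ted-bounds}(ii). Thus $\chi_1^=(G) \leq 3$ if and only if $\mathcal{R}_1(G)$ is non-empty, which by the observation above holds if and only if $G$ has a perfect matching; this delivers both directions of the equivalence simultaneously. As a consistency check, the necessity of a perfect matching can also be seen directly, exactly as in the tree case: any exact $(k,1)$-coloring matches each vertex to its unique same-colored neighbor.

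For tightness I would invoke the earlier value $\chi_1^=(C_n) = 3$ when $n$ is even but not divisible by four. The cycle $C_6$ is itself a cactus, possesses a perfect matching, and satisfies $\chi_1^=(C_6) = 3$, so the bound of $3$ cannot be lowered in general. The proof is therefore a direct assembly of prior results; the only place demanding genuine care is the bridging equivalence $\mathcal{R}_1(G) \neq \emptyset \iff G$ has a perfect matching, together with the bookkeeping of placing $d = 1$ correctly in the range of the corollary and the cactus class inside the outerplanar class.
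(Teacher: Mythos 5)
Your proof is correct, and for the equivalence it is exactly the paper's route: the paper also derives the first claim directly from Corollary~\ref{cor:explicit-ted-bounds}(ii), with the bridging fact $\mathcal{R}_1(G) \neq \emptyset \iff G$ has a perfect matching left implicit (you spell it out, which is a welcome addition, and your argument for it is sound). The only divergence is the tightness witness: you take $C_6$, citing $\chi_1^=(C_n) = 3$ for $n$ even but not divisible by four, whereas the paper uses the ``net'' graph $H'$ obtained from $K_3$ by attaching a pendant vertex to each of its three vertices, whose unique perfect matching contracts to $K_3$, giving $\chi_1^=(H') = 3$ via Lemma~\ref{lem:chi1-perf}. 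Both witnesses are valid cactus graphs with perfect matchings attaining the bound; the paper additionally packages disjoint copies of $H'$ into an infinite tight family, but disjoint unions of $C_6$ would serve the same purpose, so nothing essential is lost in your version.
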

\begin{proof}
The first claim follows directly from Corollary~\ref{cor:explicit-ted-bounds}.

To see that the bound is tight, consider the 6-vertex graph $H'$ obtained by taking a $K_3$ and adding a pendant to each of its vertices. The contraction of the (unique) perfect matching $M$ of $H'$ leaves a $K_3$ which is 3-chromatic. By taking a disjoint union of copies of $H'$ we have an infinite family of cactus graphs witnessing the obtained bound is tight.
\end{proof}

Let us proceed to give a polynomial-time algorithm to determine if $\chi_2^=(G) = 2$ for a cactus graph. To describe the algorithm, we will first introduce a definition we will need. For cliques, there already exists the notion of a \emph{simplicial} vertex, which is a vertex which belongs to exactly one maximal clique. We will now introduce the related notion of a \emph{cycle-simplicial} vertex. In a cactus graph, we will call a vertex cycle-simplicial if it belongs to exactly one cycle. Note that such a vertex can belong to multiple blocks, but only one of them can be a cycle, with the rest being cut-edges.

We are now ready to introduce Algorithm~1 for determining if a given cactus graph has an exact $(2,2)$-coloring. 
The reader should note that, to reduce clutter, we assume that the subroutines can access variables declared outside of their scope as well.
Similarly, we assume that the main algorithm can access variables (if any) declared inside the subroutines. 
In programming terms, these variables would constitute as being global. 

Let $G$ be a cactus graph on the vertices $u_1$, $u_2, \ldots, u_n$ with $r$ cycle blocks $V_1$, $V_2, \ldots, V_r$.
As a first step, Algorithm~1 preprocesses $G$ to build an auxiliary graph~$G'$.
Here, $G'$ is obtained by introducing a special vertex $x$, a vertex $v_i$ for each cycle $V_i$ where $i \in [r]$ and a vertex $w_i$ whenever the cycle $V_i$ contains a cycle-simplicial vertex.
For each $i$, if $w_i$ is added then the edges $w_iv_i$ and $w_ix$ are added.
Finally, whenever two distinct cycles $V_i$ and $V_j$ for $i,j \in [r]$ such that $i \neq j$ share a vertex, $v_i$ and $v_j$ are connected by an edge (see Figure~\ref{fig:cactus-example}).
The construction of $G'$ is detailed starting at Line~28.

\begin{figure}[t]
    \centering
    \begin{minipage}{0.49\textwidth}
        \centering
        \includegraphics[scale=1.0,keepaspectratio]{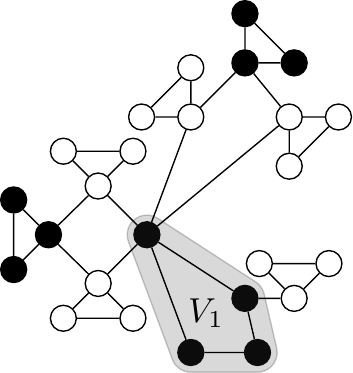} 
    \end{minipage}\hfill
    \begin{minipage}{0.49\textwidth}
        \centering
        \includegraphics[scale=1.0,keepaspectratio]{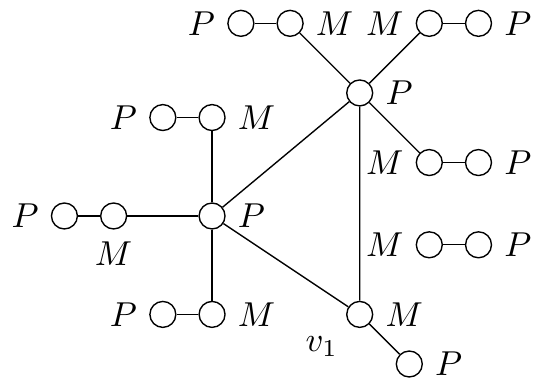} 
    \end{minipage}\hfill
    \caption{A cactus graph $G$ with an exact $(2,2)$-coloring (left) and a corresponding auxiliary graph $G'$ with an $\{M,P\}$-labeling (right). One cycle $V_1$ is highlighted in $G$ and its corresponding vertex $v_1$ labeled in $G'$. The vertices of degree one labeled $P$ are the vertices $w_i$.
    For brevity, other labels and the special vertex $x$ and its incident edges are omitted from $G'$.}
    \label{fig:cactus-example}
\end{figure}

The idea behind Algorithm~1 is that it marks the cycles of $G$ as either monochromatic (label $M$) or polychromatic (label $P$) under the $2$-coloring to be constructed. To do so, we exploit $G'$ whose vertices correspond to the cycles of $G$. 
The algorithm includes two subroutines (Line~44 and Line~51) which check local conditions for a valid coloring. Finally, once all cycles have been properly marked, the desired coloring can be constructed.

\begin{algorithm}
  \KwInput{A cactus graph $G$}
  \KwOutput{Decide whether or not $G$ has an exact $(2,2)$-coloring}
  
  \SetAlgoLined
  \DontPrintSemicolon
  \SetKwFunction{preprocess}{Preprocess}
  \SetKwFunction{algo}{}
  \SetKwFunction{procp}{LocalSubroutineP}
  \SetKwFunction{procm}{LocalSubroutineM}
  \SetKwFunction{preproc}{Preprocess}
  
  {
  
  Let $V(G) = \{u_1, u_2, \ldots, u_n\}$\;
  Let $\{V_1, V_2, \ldots, V_r\}$ be the cycles in $G$\;
  Initialize $G'$ to $\texttt{Preprocess}(G)$\;
  
  
  \For{$i = 1$ \KwTo $n$}{
  	\If{$U_i = \emptyset$}{\KwRet{NO}}
  }  

  Label $x$ with the label $M$\;
  
  \For{$i = 1$ \KwTo $r$}{
  	\If{$w_i \in V'$}{
  		Label $w_i$ with $P$\;
  		Label $v_i$ with $M$\;
  		\If{Calling \texttt{LocalSubroutineM}($v_i$) returns NO}{
  			\KwRet{NO}
  		}
  	}
  }   
  
  \Repeat{all vertices in $G'$ are labeled}{
  	\For{$j = 1$ \KwTo $n$}{
  		\If{$U_j$ contains some $v'$ labeled $M$ and some unlabeled $v_m$}{
			Label $v_m$ with $P$\; 
			\If{Calling \texttt{LocalSubroutineP}($v_m$) returns NO}{
  				\KwRet{NO}
  			}
  			\Continue
  		}
  		
  		\If{$U_j$ contains some unlabeled $v_m$ and all its other vertices are labeled $P$}{
			Label $v_m$ with $M$\; 
			\If{Calling \texttt{LocalSubroutineM}($v_m$) returns NO}{
  				\KwRet{NO}
  			}
  			\Continue
  		}
  	}
  }
  
  \KwRet{YES}
  
  }{}
    
  \caption{Algorithm for finding an exact $(2,2)$-coloring for a cactus graph}
\end{algorithm} 

\begin{algorithm}
\LinesNumbered
\setcounter{AlgoLine}{27}
  
  \SetAlgoLined
  \DontPrintSemicolon
  \SetKwFunction{preprocess}{Preprocess}
  \SetKwFunction{algo}{}
  \SetKwFunction{procp}{LocalSubroutineP}
  \SetKwFunction{procm}{LocalSubroutineM}
  \SetKwFunction{preproc}{Preprocess}
  
  \SetKwProg{myproc}{Procedure}{}{}
  
  \tcp{Supporting subroutines for Algorithm~1}  
  \BlankLine
  \BlankLine
  
  \myproc{\preproc{$G$}}{
  
  Let $G' = (V', E')$ where $V' = \{x\}$ and $E' = \emptyset$\;  
  
  \For{$i = 1$ \KwTo $r$}{
	Set $V' = V' \cup \{v_i\}$\;
	\If{$V_i$ has a cycle-simplicial vertex in $G$}{
		Set $V' = V' \cup \{w_i\}$ and $E' = E' \cup \{ w_iv_i, w_ix \}$
	}  
  }
  
  \For{$i = 1$ \KwTo $r-1$}{
  	\For{$j = i + 1$ \KwTo $r$}{
		\If{$V_i$ and $V_j$ share a vertex in $G$}{
			Set $E' = E' \cup \{ v_iv_j \}$  	
		}
  	}
  }
  
  \For{$i = 1$ \KwTo $n$}{
	Let $U_i = \emptyset$\;
	\For{$j = 1$ \KwTo $r$}{
		\If{$u_i \in V_j$}{
			$U_i = U_i \cup \{ v_j \}$		
		}	
	}  
  }
    
  \KwRet{$G'$}    
    
  }  
  
  \BlankLine
  \BlankLine  
  
  \SetKwProg{myproc}{Procedure}{}{}
  \myproc{\procp{$v_i$}}{
  \If{$V_i$ is a cycle of odd length in $G$}{\KwRet{NO}}
  \For{$j = 1$ \KwTo $n$}{
    \If{every vertex of $U_j$ is labeled $P$}{\KwRet{NO}}}
  \KwRet{MAYBE}\;}
  
  \BlankLine
  \BlankLine  
  
  \SetKwProg{myproc}{Procedure}{}{}
  \myproc{\procm{$v_i$}}{
  \If{any neighbor of $v_i$ is labeled $M$}{\KwRet{NO}}
  \KwRet{MAYBE}}
  
\end{algorithm} 

\begin{lemma}
\label{lem:extract-color}
Algorithm~1 correctly finds an exact $(2,2)$-coloring for a given cactus graph $G$ or decides that no such coloring exists.
\end{lemma}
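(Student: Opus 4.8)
The plan is to prove correctness in three stages: a structural characterization of the valid colorings, a reformulation as constraint propagation on the cycle blocks, and an analysis showing Algorithm~1 implements that propagation soundly and completely. The crucial structural observation is that in a cactus every cycle (as a subgraph) is a block, and any two distinct cycle blocks meet in at most one cut vertex. Since each color class of an exact $(2,2)$-coloring induces a $2$-regular graph, i.e.\ a disjoint union of cycles, each such cycle must be a cycle block of $G$. Hence every vertex lies on a monochromatic cycle block, and (to avoid four same-colored neighbors) on exactly one; I call this its $M$-block and label every remaining cycle block $P$. A $P$-block receives a proper $2$-coloring of its own vertices (each gets its two same-colored neighbors from its $M$-block), so it must be an even cycle, while a $K_2$-block or a vertex on no cycle cannot be satisfied. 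Phrasing this through Lemma~\ref{lem:chid-perf}, $\chi_2^=(G)\le 2$ holds iff there is a selection $H$ of cycle blocks partitioning $V(G)$ with $G/H$ bipartite; because two cycle blocks share at most one vertex and the block-cut tree is acyclic, the cycles of $G/H$ are exactly the contracted $P$-blocks, so $G/H$ is bipartite iff every $P$-block is even. This reduces the problem to finding an $M/P$ labeling of the cycle blocks with (C1) each vertex lying in exactly one $M$-block, and (C2) every $P$-block being an even cycle.

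Next I would show that the bookkeeping in Algorithm~1 and its subroutines enforce exactly (C1) and (C2). The initial test $U_i=\emptyset$ rejects vertices on no cycle. A cycle-simplicial vertex lies in a unique cycle block, which by (C1) must be its $M$-block; this is the forced labeling installed via the $w_i$/$x$ gadget together with \texttt{LocalSubroutineM}, whose neighbour check forbids two $M$-blocks meeting at a shared vertex. The two rules in the main loop are precisely unit propagation for the ``exactly one $M$ per $U_j$'' constraint: if some block of $U_j$ is already $M$ then every other block of $U_j$ is forced to $P$, and if all but one block of $U_j$ is $P$ then the last is forced to $M$. Finally \texttt{LocalSubroutineP} rejects an odd block labeled $P$ (a (C2) violation) and rejects any $U_j$ all of whose blocks are $P$ (a (C1) violation: a vertex with no $M$-block). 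Every deduction the algorithm makes is thus forced by (C1)--(C2), giving soundness: if a coloring exists the algorithm never returns NO and the labeling it produces agrees with the one induced by that coloring.

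For completeness together with termination, I must argue the repeat-loop never stalls with unlabeled blocks remaining. I would induct on the block-cut tree: every leaf cycle block contains a cycle-simplicial vertex and is labeled $M$ at the outset, and for an internal block each of its cut vertices leads into an already-resolved subtree that either already contains an $M$-block at that cut vertex (forcing the block to $P$ via the first rule) or has all its other blocks labeled $P$ (forcing it to $M$ via the second rule). Hence some rule always applies until every block is labeled or a violation is found, and since each pass labels at least one previously unlabeled block, the loop terminates in polynomially many steps. Once a full consistent labeling is reached, the structural step guarantees $G/H$ is bipartite for $H$ the set of $M$-blocks; properly $2$-coloring $G/H$ and lifting the colors (each $M$-block monochromatic, each $P$-block alternating) yields an exact $(2,2)$-coloring of $G$.

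I expect the main obstacle to be the structural step: pinning down that the cycles of the contracted graph $G/H$ are \emph{precisely} the contracted $P$-blocks, so that global bipartiteness reduces to the purely local evenness test performed by \texttt{LocalSubroutineP}, and, relatedly, proving that propagation on the acyclic block structure never stalls. The soundness of the individual rules is routine once (C1)--(C2) are isolated, but verifying that the specific order of operations in the loop, and the auxiliary $w_i$/$x$ vertices, cannot introduce a spurious conflict or an infinite loop is the delicate part.
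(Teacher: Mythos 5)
Your proposal is correct, and its skeleton---characterize exact $(2,2)$-colorings of a cactus by an $\{M,P\}$-labeling of the cycle blocks subject to (C1) and (C2), then argue that Algorithm~1 is a sound and complete constraint-propagation procedure for those two conditions---is the same as the paper's. What differs is how each of the three sub-arguments is executed, and in each case your route is a legitimate alternative. For termination, the paper argues by contradiction that a stalled loop yields an arbitrarily long repetition-free path of unlabeled vertices in $G'$, impossible in a finite graph; you instead induct on the rooted block-cut tree, using that leaf cycle blocks contain cycle-simplicial vertices (hence are labeled $M$ at the start), so a deepest unlabeled block always has all blocks below it labeled and one of the two rules applies. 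For ``no false negatives,'' the paper does a case analysis over the lines where NO can be returned and backtracks along the chain of deductions until it reaches a cycle-simplicial block that the hypothetical coloring would have to make polychromatic; your forward invariant---every deduction is forced by (C1)--(C2), so the algorithm's labels agree with the labeling induced by any valid coloring---is the same fact packaged as the standard soundness argument for unit propagation, and is arguably cleaner. For ``no false positives,'' the paper extracts the coloring directly (monochromatic $M$-cycles, alternating $P$-cycles, bichromatic cut-edges) and checks validity by hand, whereas you invoke Lemma~\ref{lem:chid-perf} and reduce to bipartiteness of $G/H$ via the claim that the cycles of $G/H$ are exactly the contracted $P$-blocks. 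That claim, which you rightly flag as the crux, is true, and the clean justification is the cactus fact that every cycle is a block: any cycle of $G/H$ lifts to a cycle of $G$ containing some uncontracted edge; the lifted cycle is itself a block; cut-edges lie on no cycle; and a block sharing an edge with a $P$-block must equal that $P$-block. Your version ties the algorithm to the general machinery of Section~\ref{sec:general-results}, at the price of this extra lifting argument, which the paper's hands-on coloring construction avoids.
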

\begin{proof}
The proof is broken into three parts. First, we give a high-level overview of what the solution (if any) must look like. Then, we observe that the algorithm halts, and finally prove that it produces (i) no false positives and (ii) no false negatives.

\paragraph{Intuition} Let $G'$ be the preprocessed auxiliary graph obtained from $G$ by the subroutine on Line~28.
By construction, the vertices $v_i$, for $i \in [r]$, of $G'$ correspond to the cycles of $G$. Similarly, the vertices of $G$ correspond to the maximal cliques of $G'$ (denoted by the sets $U_j$), excluding the edges that involve any $w_i$.
We also observe that, since any exact $(2,2)$-coloring will result in two $2$-regular color classes, these must necessarily be sets of disjoint cycles. In a cactus graph, all cycles correspond to blocks of the graph, which means that we must select two disjoint subsets of these cycle blocks to turn into color classes. 
As a first step after the construction of $G'$, we ensure (Lines~4--6) that each vertex of $G$ is on some cycle which is a necessary condition for the existence of a valid exact $(2,2)$-coloring.
Since each vertex must have exactly two neighbors of its own color, they must all belong to a specific cycle. As such, every vertex in $G$ has to belong to exactly one monochromatic cycle in a valid exact $(2,2)$-coloring; any other edge incident to it cannot be induced in a color class and must therefore not be monochromatic. The two subroutines $\procp{}$ and $\procm{}$ ensure that this is the case. Whenever a cycle is labeled $M$, $\procm{}$ checks that it does not share a vertex with another cycle already labeled $M$. Whenever a cycle is labeled $P$, $\procp{}$ checks that no vertex belongs only to cycles labeled $P$. This ensures that any vertex shared by more than one cycle will ``see" exactly one monochromatic cycle. As for cycle-simplicial vertices, these must force their cycles to be monochromatic, which is accomplished with the $w_i$ vertices.
An example of an $\{M,P\}$-labeling of an auxiliary graph $G'$ of a cactus graph $G$ is shown in Figure~\ref{fig:cactus-example}.

\paragraph{Halting}
Let us show that Algorithm~1 (Lines~1--27) halts, i.e., does not run indefinitely on a valid input cactus graph $G$.
It is obvious that the subroutines (Line~28, 44 and 51) halt.
Similarly, as Lines~1-13 obviously halt it suffices to show that the looping procedure from Line~14 halts, i.e., it indeed labels all vertices of $G'$ (unless it halts early, by returning a negative response). In particular, we will show that the loop always labels a vertex when traversed.

We will show this by contradiction. Note that every vertex $v_i$, corresponding to a cycle $V_i$ with a cycle-simplicial vertex, will be labeled $M$. Assume that the algorithm goes into an infinite loop and leaves some vertex, say $v_1$ of $G'$ unlabeled. Then, it must be the case that every clique that $v_1$ belongs to has some number of vertices labeled $P$ and at least one vertex that is also unlabeled, with no vertices labeled $M$. If this were not the case, then one of the if-statements (Line~16 and Line~21) within the loop would cause $v_1$ to be labeled. Since $v_1$ is not labeled, $V_1$ does not contain cycle-simplicial vertices. This implies that each of its vertices (which number at least three) must be shared with other cycles and for each such vertex there must be at least one corresponding unlabeled cycle (otherwise the loop would resolve $v_1$). Thus, there must exist at least three unlabeled neighbors of $v_1$. Select one such unlabeled neighbor, say $v_2$. As $v_2$ is unlabeled, $V_2$ must also have no cycle-simplicial vertices, so it must also have at least three unlabeled neighbors. Select one such neighbor, say $v_3$, making sure it is not $v_1$. By continuing in this manner, we can get an infinite sequence of unlabeled vertices in $G'$ without repetitions and so that no vertex in this sequence will be adjacent to any other vertices, except the ones immediately preceding or following it. This creates an infinite path in $G'$ which is impossible when $G$ is finite. As such, the algorithm above must label every vertex in $G'$ or end prematurely, which means it must halt. Note that this implies that every time the looping procedure of Line~14 is traversed, a vertex is labeled.

\paragraph{Correctness}
We will now give an exact $(2,2)$-coloring $c: V \rightarrow \{0, 1\}$ from the labels used in $G'$, i.e., after Algorithm~1 has successfully returned YES from Line~27. In each connected component of $G$, arbitrarily select a vertex and color it with $0$. For every uncolored neighbor~$v$ of a colored vertex~$u$ in $G$, we proceed as follows.

\begin{itemize}
\item If the edge $uv$ is in a cycle in $G$ whose corresponding vertex in~$G'$ is labeled $M$, we color~$v$, and ultimately the entire cycle, with $c(u)$.

\item If $uv$ is in a cycle in $G$ whose corresponding vertex in $G'$ is labeled $P$, we color $v$ with $1 - c(u)$, which causes the corresponding cycle to alternate between the two colors $0$ and $1$ (here we draw attention to Line~45 which ensures that this is possible to do).

\item Finally, if $uv$ is a cut-edge, color $v$ with $1 - c(u)$.

\end{itemize}
Observe that, since every vertex $u'$ in $G$ is in exactly one cycle labeled $M$, this will be a valid coloring, since all neighbors of $u'$ outside this cycle will not share its color. We also note that this coloring is well-defined, due to the structure of $G$. Given any two vertices in a connected component, every simple path between them will go through the exact same block sequence. We see, therefore, that the algorithm returns no false positive results.

Now, we will show that the algorithm returns no false negative results. Assume that some cactus graph $G$ has an exact $(2,2)$-coloring $c$. As discussed, every block of $G$ must be either a monochromatic cycle, a cycle of alternating colors, or a bichromatic edge. We claim that Algorithm~1, when applied to $G$, must complete and return YES. 
First, suppose that Algorithm~1 returns NO before before we reach the loop on Line~14. This happens either on Line~6 or Line~13.
\begin{itemize}

\item Line~6: Here, the algorithm stops if any clique $U_i$ is empty. This will happen if the vertex $u_i$ does not belong in any cycles in $G$. But if this is the case, $u_1$ cannot belong to a valid color class, a contradiction.

\item Line~13: Here, a graph is rejected if it has two cycles which have cycle-simplicial vertices and also share some vertex. Since both these cycles would need to be monochromatic, this creates a contradiction.

\end{itemize}
Therefore, if Algorithm~1 returns NO, this must happen in the repeat loop, and therefore as part of calling one of the local subroutines. 
This happens on Line~46, Line~49 or Line~53.
\begin{itemize}

\item Line~46: If this causes the algorithm to end, then some odd cycle was labeled with $P$ by the algorithm. Clearly, this cycle must be monochromatic in $c$. In a similar manner, if the algorithm stops at Line~49, there must be some clique, say $U$, all of whose vertices are labeled $P$. That is to say, there is some corresponding vertex $u$ in $G$, which is in no monochromatic cycle by the algorithm. Under $c$, this vertex must be contained in some monochromatic cycle. In both cases, we have some cycle, say $V_1$ that is monochromatic in $c$, yet labeled with $P$ by Algorithm~1. This implies that there must be some other cycle, say $V_2$ which shares some vertex, say $u_1$ with $V_1$, so that $v_2$ is labeled $M$ under Algorithm~1, as this is the only way for $v_1$ to be labeled $P$ by the repeat loop. But then, since $V_1$ and $V_2$ share $u_1$, $V_2$ must be polychromatic under $c$. As such, $V_2$ also cannot contain a cycle-simplicial vertex. But then, since $v_2$ is labeled $M$ by the algorithm, there must be some clique, say $U_2$, distinct from $U_1$, which contains $v_2$ and all of whose other vertices are labeled $P$ by the algorithm. This clique will correspond to the vertex $u_2$, which will be in $V_2$ but not $V_1$. But then, there must be some cycle $V_3$, which contains $u_2$, is contained in $U_2$, is labeled $P$ by the algorithm and is monochromatic under $c$. By continuing in this manner, we can backtrack along some subsequence of traversals of the repeat loop in Algorithm~1 in reverse, which create a path in $G'$ whose labels are exactly the opposite from the ones implied by $c$. But, given the way in which Algorithm~1 determines labels, this path must reach some cycle which contains a cycle-simplicial vertex, yet is polychromatic in $c$, which is a contradiction.

\item Line~53: In this case, we must have some vertex, say $u_1$, contained in two cycles, both of which are labeled $M$ by the algorithm. In $c$, at least one of these cycles must be polychromatic. Once again, we have a cycle which takes a different label by Algorithm~1 than is implied by $c$. By following similar steps to the previous case, we can once again reach a contradiction.
\end{itemize}
This concludes the proof.
\end{proof}

The above lemma reveals another interesting fact: The $\{M, P\}$-labeling of any cactus graph with an exact $(2, 2)$-coloring is unique. Furthermore, depending on our choice of starting vertices for each connected component, every possible valid coloring (up to isomorphism of the color classes) can be obtained. Both of these facts are also true for exact $(k, 2)$-colorings, for any $k$, as can be seen later in this section. 

\begin{lemma}
Algorithm~1 runs in time polynomial in the size of the input cactus graph $G$.
\end{lemma}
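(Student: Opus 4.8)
The plan is to bound the running time by analyzing each phase of Algorithm~1 separately, namely the preprocessing step, the cycle-simplicial initialization, and the main \texttt{repeat} loop, and to argue that each runs in time polynomial in $n$ (and hence in the size of $G$, since a cactus graph on $n$ vertices has $O(n)$ edges and $O(n)$ cycle blocks). First I would observe that the number of cycle blocks satisfies $r = O(n)$, since each cycle block is edge-disjoint from every other block and contains at least three edges, so $\sum |V_i| \le |E(G)| = O(n)$. This also bounds $|V(G')| = O(n)$ and $|E(G')| = O(n^2)$. Identifying all the cycle blocks and detecting which vertices are cycle-simplicial can be done by a single linear-time traversal that computes the biconnected components of $G$, so the decomposition into $\{V_1,\ldots,V_r\}$ and the lists $U_i$ are available in $O(n)$ time.

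Next I would bound the cost of \texttt{Preprocess} (Line~28). The first loop over the $r$ cycles adds the $v_i$ and $w_i$ vertices in $O(r) = O(n)$ time. The double loop (Lines over $i,j$) checks, for each pair of cycles, whether they share a vertex; naively this is $O(r^2)$ pairs, each checkable in time proportional to the cycle sizes, but a cleaner bound comes from the fact that two cycles in a cactus can share at most one vertex (a cut-vertex), so the total number of shared-vertex incidences, and hence edges $v_iv_j$ added, is $O(n)$; regardless, a crude $O(n^2)$ or $O(n^3)$ bound already suffices for the polynomial claim. The final loop building the sets $U_i$ scans each cycle membership once and so costs $O(\sum_i |V_i|) = O(n)$.

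The main obstacle — and the step I would treat most carefully — is bounding the \texttt{repeat} loop on Line~14. The key fact, already established in the halting argument of Lemma~\ref{lem:extract-color}, is that every full traversal of the inner \texttt{for} loop labels at least one previously unlabeled vertex of $G'$. Since $G'$ has $O(n)$ vertices, the \texttt{repeat} block executes $O(n)$ times. Each traversal of the inner loop iterates $j$ from $1$ to $n$ and, for each $j$, inspects the labels of the vertices in the clique $U_j$; this inspection costs $O(|U_j|)$, and $\sum_j |U_j| = O(n)$ by the incidence bound above. Hence one traversal costs $O(n)$ and the whole loop costs $O(n^2)$. The invocations of \texttt{LocalSubroutineP} and \texttt{LocalSubroutineM} each run in $O(n)$ time, as the former scans the cliques $U_j$ once and the latter inspects the neighbors of a single vertex; since each subroutine is called at most once per labeling event and there are $O(n)$ such events, their total contribution is $O(n^2)$. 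Combining all phases, the algorithm terminates in time polynomial (indeed, $O(n^2)$ or better) in the size of $G$, completing the proof.
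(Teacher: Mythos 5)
Your proposal is correct and follows essentially the same route as the paper's proof: both analyze the algorithm phase by phase and, crucially, both invoke the halting argument of Lemma~\ref{lem:extract-color} to conclude that each traversal of the loop on Line~14 labels a previously unlabeled vertex of $G'$, bounding the number of iterations by the number of cycles. Your accounting is merely tighter (e.g., $\sum_j |U_j| = O(n)$, yielding an explicit $O(n^2)$ bound), whereas the paper settles for cruder polynomial bounds such as $n$ cliques times $r$ vertices per iteration.
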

\begin{proof}
\label{lem:alg-cactus-polytime}
Clearly, the construction of the auxiliary graph $G'$ from Line~28 executes in polynomial time as each loop does a polynomial amount of work in either the number of vertices $n$ or the number of cycles $r$, where $r \leq \frac{n}{3}$.
Moreover, these $r$ cycles are easily found in time polynomial in the number of vertices $n$ as $G$ is a cactus graph.
Similarly, the two other subroutines are also carried out in polynomial time each time they are called. 
Each $U_j$ can contain no more than $r$ vertices and each $v_i$ can have at most $r$ neighbors. Therefore, the first two loops of Algorithm~1 also run in polynomial time.
Finally, as shown in the proof of the previous statement, each execution of the loop on Line~14 labels an unlabeled vertex of $G'$.
As such, the loop is executed at most $r$ times.
Each traversal of the loop must check up to $n$ cliques in $G'$ and for each clique up to $r$ vertices must be checked for their labels. Subsequently, one of the local subroutines is called (which, as already discussed, runs in polynomial time).
Therefore, the algorithm runs in polynomial time.

\end{proof}

We have shown that Algorithm~1 is both correct and runs in polynomial time, so we have proved the following.
\begin{theorem}
\probTwoEDTwoCOL is solvable in polynomial time on cactus graphs.
\end{theorem}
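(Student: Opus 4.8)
The plan is to obtain the theorem as an immediate consequence of the two preceding lemmas about Algorithm~1. First I would observe that the decision problem \probTwoEDTwoCOL on a cactus $G$ asks precisely whether $\chi_2^=(G) \le 2$, and that this is equivalent to asking whether $G$ admits an exact $(2,2)$-coloring. The forward implication is trivial, and the converse holds because the property is monotone in the size of the palette: any exact $(1,2)$-coloring (i.e., a monochromatic coloring witnessing that $G$ is $2$-regular) is also an exact $(2,2)$-coloring once we regard the single color as drawn from a two-element palette. Hence Algorithm~1, which by design decides the existence of an exact $(2,2)$-coloring, is exactly a decision procedure for \probTwoEDTwoCOL on cactus graphs.

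With this identification in place, the proof is a one-line combination. By Lemma~\ref{lem:extract-color}, Algorithm~1 returns YES if and only if $G$ has an exact $(2,2)$-coloring, so it produces neither false positives nor false negatives; by Lemma~\ref{lem:alg-cactus-polytime}, it terminates in time polynomial in $|V(G)|$. Together these give a correct polynomial-time decision procedure, which is exactly the claim.

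I do not expect any genuine obstacle at this final step, since all the work has already been absorbed into the two lemmas. If anything, the only point that warrants a sentence of care is the palette-monotonicity remark above, so that the reader is convinced that ``$G$ admits an exact $(2,2)$-coloring'' and ``$\chi_2^=(G) \le 2$'' are interchangeable, and that neither the $d>2$ regime (where every cactus has $\chi_d^=(G) = \infty$ and $\chi_2^=$ is irrelevant) nor the $k=1$ regime (a trivial $2$-regularity test) is being silently conflated with the case the algorithm actually handles. The substantive difficulty lies entirely in the correctness argument behind Lemma~\ref{lem:extract-color}, and in particular in its no-false-negatives direction with the backtracking along $G'$ toward a cycle-simplicial vertex; by the time we reach this theorem, that difficulty has already been discharged, so the capstone itself is purely a matter of assembling the pieces.
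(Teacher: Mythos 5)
Your proposal matches the paper's own proof, which likewise obtains the theorem as an immediate consequence of Lemma~\ref{lem:extract-color} (correctness of Algorithm~1) and Lemma~\ref{lem:alg-cactus-polytime} (polynomial running time). Your extra remark on palette monotonicity is correct but already implicit in the paper's definition of \probtEDkCOL as asking for a partition into \emph{at most} $k$ color classes, so nothing further is needed.
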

\noindent We can now modify the above result to obtain colorings with more than two colors. By also observing that any graph with an exact $(1, 2)$-coloring must be a disjoint union of cycles, we get the following result.

\begin{corollary}
\probTwoEDkCOL is solvable in polynomial time on cactus graphs.
\end{corollary}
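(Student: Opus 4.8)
The plan is to handle the three ranges of $k$ separately, in each case reducing to machinery already developed. Throughout, recall that a cactus has a vertex of degree at most two, so $\chi_2^=(G)$ is the only relevant parameter (for $d > 2$ we have $\chi_d^=(G) = \infty$), and that by Lemma~\ref{lem:chid-perf} we have $\chi_2^=(G) = \min\{\chi(G/H) \mid H \in \mathcal{R}_2(G)\}$, where $\mathcal{R}_2(G)$ is the set of partitions of $V(G)$ into induced $2$-regular subgraphs, i.e.\ into vertex-disjoint (induced) cycles. The crucial structural observation is that in a cactus every induced cycle is a cycle block, so each $H \in \mathcal{R}_2(G)$ is exactly a choice of which cycle blocks are \emph{monochromatic}; this is precisely the set of $M$-labelled cycles produced by Algorithm~1. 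By the uniqueness statement following Lemma~\ref{lem:extract-color}, this $M$-labelling is unique, so $\mathcal{R}_2(G)$ is either empty or a singleton $\{H\}$.

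For $k = 1$, an exact $(1,2)$-coloring exists if and only if $G$ is itself a disjoint union of cycles, which is checkable in linear time. For $k = 2$, the answer is exactly that of \probTwoEDTwoCOL, which we have already shown is solvable in polynomial time on cactus graphs. The interesting case is $k \geq 3$. Here I would invoke Corollary~\ref{cor:explicit-ted-bounds}(ii): since a cactus is outerplanar and $d = 2$, we have $\chi_2^=(G) \leq 3$ whenever $\mathcal{R}_2(G)$ is non-empty, and $\chi_2^=(G) = \infty$ otherwise. Consequently, for every $k \geq 3$ the instance is a \textsc{YES}-instance if and only if $\mathcal{R}_2(G) \neq \emptyset$, so it suffices to decide non-emptiness of $\mathcal{R}_2(G)$ in polynomial time.

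To decide $\mathcal{R}_2(G) \neq \emptyset$, I would modify Algorithm~1 by deleting the odd-cycle rejection in \procp{} (Line~45), retaining only the coverage test (Lines~47--49) and the conflict test of \procm{}. The analysis in the proof of Lemma~\ref{lem:extract-color}---halting and the absence of false positives and false negatives---is entirely insensitive to cycle parity: the labelling rules (Lines~16 and~21) never reference lengths, and the parity check served solely to guarantee that a polychromatic cycle could be \emph{properly $2$-colored} while keeping each of its vertices' two same-colored neighbours inside that vertex's monochromatic cycle. When $k \geq 3$ this obstruction disappears. Thus the modified algorithm computes the unique $M$-labelling $H$ (returning \textsc{NO} exactly when $\mathcal{R}_2(G) = \emptyset$), and $H$ yields the desired coloring: contract each monochromatic cycle of $H$ to a single vertex. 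Since each polychromatic cycle meets each monochromatic cycle in at most one vertex (two blocks of a cactus share at most one vertex), the quotient $G/H$ is again a cactus whose cycles are precisely the images of the polychromatic cycles; being a cactus it has chromatic number at most $3$, a proper $3$-coloring is found trivially, and pulling it back along Lemma~\ref{lem:chid-perf} gives an exact $(k,2)$-coloring. (Alternatively, one may simply invoke Lemma~\ref{lem:tw-poly} with $d = 2$, $k = 3$, $p = 2$ to decide $\chi_2^=(G) \leq 3$ in linear time, bypassing Algorithm~1 entirely for this case.)

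The main obstacle I anticipate is justifying rigorously that removing the parity check leaves the $M/P$-labelling correct for deciding $\mathcal{R}_2(G) \neq \emptyset$---that is, confirming that parity played no role in the forcing of labels and that an odd polychromatic cycle can always be realized once three colors are available. This amounts to verifying that the odd cycles of $G/H$ are exactly the odd polychromatic cycles and that $\chi(G/H) \in \{1,2,3\}$ is governed precisely by the edgelessness and parity structure of $G/H$; the cleanest safeguard against error is to argue the $k \geq 3$ case through Lemma~\ref{lem:tw-poly} and Corollary~\ref{cor:explicit-ted-bounds}(ii), using the modified Algorithm~1 only to exhibit the explicit combinatorial coloring.
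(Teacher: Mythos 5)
Your proof is correct, and its engine is the same as the paper's: rerun Algorithm~1 with the odd-cycle test (Lines~45--46 of \texttt{LocalSubroutineP}) deleted, observing that parity was only ever needed to make polychromatic cycles properly $2$-colorable, an obstruction that vanishes once $k \geq 3$. The differences are in the surrounding scaffolding. The paper extracts the final coloring directly, by modifying the process of Lemma~\ref{lem:extract-color}: $M$-cycles are made monochromatic, each $P$-cycle gets an arbitrary proper $k$-coloring extending its one already-colored vertex, and cut-edges are made bichromatic, with well-definedness following from the tree-like block structure. You instead route the argument through Corollary~\ref{cor:explicit-ted-bounds}(ii) (for $k \geq 3$ the answer is YES iff $\mathcal{R}_2(G) \neq \emptyset$) and Lemma~\ref{lem:chid-perf}: contract the $M$-cycles and properly $3$-color the quotient, which is again a cactus because cacti (indeed outerplanar graphs) are closed under contraction. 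Both extractions are valid; yours makes the correctness claim crisper (the $M$-labelling is exactly the unique element of $\mathcal{R}_2(G)$), while the paper's avoids any appeal to minor-closedness. Finally, your parenthetical fallback --- invoking Lemma~\ref{lem:tw-poly} with $d = 2$, $p = 2$, since cacti have treewidth at most two --- is by itself a complete and rigorous proof of the corollary as stated, and is the safest route; the paper deliberately does not take it because the stated purpose of this section is to give a combinatorial algorithm that does not rely on metatheorems.
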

\begin{proof}
Let $G$ be a cactus graph and $k > 2$. We can reuse Algorithm~1 with one simple modification: We remove Line~45 and Line~46 from $\procp{}$. This step was added to ensure that no cycle of odd length would be 2-colored. Since we are now working with more than 2 colors, that particular problem disappears. To obtain an exact $(k, 2)$-coloring, we slightly modify the process of Lemma~\ref{lem:extract-color}. When a cycle is labeled with $M$, we make it monochromatic, as before. If a cycle is labeled with $P$, its colors no longer need to alternate. We extend the coloring of the single vertex already colored in that cycle to an arbitrary proper $k$-coloring (i.e. without monochromatic edges). Finally, when dealing with a cut-edge, we must simply ensure that it is not monochromatic as well. Note that, since the coloring of each connected component of $G$ is done in a manner similar to a tree search, this cannot result in an ill-defined coloring.
\end{proof}

\subsection{Block graphs}
We now turn our attention to block graphs, which have bounded cliquewidth like cactus graphs do.

We will use the following problem and result as a subroutine.
In the \probCompleteCover problem, we are given a graph $G=(V,E)$ and asked whether $V$ can be partitioned into copies of $K_r$ for some fixed $r \geq 3$. The problem is $\NP$-complete in general~\cite{Garey1979}, but solvable in linear time on chordal graphs~\cite{Dahlhaus1998}.
\begin{theorem}[Dahlhaus and Karpinski~\cite{Dahlhaus1998}]
\label{thm:tr-cover}
For every $r \geq 3$, the \probCompleteCover problem can be solved in linear time for chordal graphs.
\end{theorem}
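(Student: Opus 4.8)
The statement is a cited result of Dahlhaus and Karpinski, so I would aim to reconstruct a proof rather than invent a new technique. The plan is to exploit the two standard structural descriptions of chordal graphs: a \emph{perfect elimination ordering} and a \emph{clique tree} (a tree decomposition whose bags are exactly the maximal cliques), both of which can be computed in linear time via Lex-BFS or maximum cardinality search, and both of which already supply the linear-size skeleton we need.

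First I would record the key locality lemma: since every clique extends to a maximal clique, every copy of $K_r$ in a $K_r$-factor lies inside a single maximal clique, and conversely any $r$ pairwise-adjacent vertices inside a maximal clique form a valid $K_r$. Thus a factor is really a decision about how to group vertices \emph{inside} maximal cliques, with the only coupling between cliques coming through the separators of the clique tree: each shared vertex must be used in exactly one $K_r$.

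The main work is a bottom-up dynamic program on the rooted clique tree. For a node $C$ with separator $S$ to its parent, the vertices in $C \setminus S$ are private to the subtree below $C$ (running intersection property), so they must be covered by $K_r$'s formed within that subtree, possibly borrowing some vertices of $S$. The state passed to the parent should be the set of separator vertices that remain unconsumed. The crucial reduction, and the main obstacle, is to avoid the $2^{|S|}$ blow-up of tracking \emph{which} separator vertices survive. Here I would prove, via an exchange argument using that $S$ is a clique and that separators are nested along any root-to-leaf path, that the vertices of $S$ are interchangeable up to their upward reach, so it suffices to track a single count (how many separator vertices the subtree hands back), and that a greedy rule consuming the fewest separator vertices — preferring those of smallest upward reach — is always safe. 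The local computation at each node then reduces to a divisibility condition on $|C \setminus S|$ modulo $r$ together with a small bipartite matching between still-unmatched private vertices and available separator vertices, solvable by augmenting paths in time proportional to $|C|$.

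Finally I would verify correctness by the standard optimal-substructure argument (a feasible factor restricted to a subtree yields a feasible partial solution with the claimed hand-back count, and conversely any such partial solution extends), and bound the running time: the clique tree has $O(n)$ nodes and the total size of all bags and separators is $O(n + m)$, so if each node's local work is linear in its bag size the whole algorithm is linear. I expect the genuinely delicate point to be the interchangeability/nesting argument that collapses the separator state to a single number while preserving correctness, since separator vertices really do differ in how far up the tree they extend; making that precise is where the actual content of the Dahlhaus--Karpinski result lies.
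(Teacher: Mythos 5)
First, a framing point: the paper does not prove this statement at all --- it is quoted directly from Dahlhaus and Karpinski and used as a black box in the block-graph section --- so your reconstruction can only be judged on its own merits, not against a proof in the paper.

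Your skeleton (linear-time clique tree, every $K_r$ of a factor lies inside a maximal clique, bottom-up processing with the parent separator as the interface) is the natural one, but the step you yourself flag as ``delicate'' is not merely delicate: as stated it is false, in two ways. First, the upward reaches of separator vertices are not nested, and neither are separators along a root-to-leaf path: two vertices $x,y$ of a separator can extend upward into \emph{different} branches hanging off the parent bag, so ``smallest upward reach'' is not even well defined, and the nesting you invoke for the exchange argument is not available. Second, and fatally, no bottom-up rule that sees only the subtree, the count, and the reach pattern can make the correct choice, because correctness depends on divisibility constraints in remote branches. Concretely, for $r=3$ take maximal cliques $C'=\{a,b,x,y\}$, $P=\{x,y,z_1,z_2,z_3\}$, $Q_1$, $Q_2$, with clique-tree edges $C'P$, $Q_1P$, $Q_2P$ and separators $\{x,y\}$, $\{x\}$, $\{y\}$. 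In scenario A let $Q_1=\{x,p_1,p_2\}$ and $Q_2=\{y,s_1,s_2,s_3\}$: the unique $K_3$-factor is $\{x,p_1,p_2\}$, $\{a,b,y\}$, $\{z_1,z_2,z_3\}$, $\{s_1,s_2,s_3\}$, so the leaf $C'$ must consume $y$. In scenario B swap the sizes, $Q_1=\{x,p_1,p_2,p_3\}$ and $Q_2=\{y,s_1,s_2\}$: now $C'$ must consume $x$. In both scenarios the leaf $C'$, its separator, and the reaches of $x$ and $y$ (two bags each, incomparable) are identical, so your greedy ties and any fixed tie-break is wrong on one of the two instances; likewise the single hand-back count you propose to pass upward (``one vertex of $S$ is consumed'') erases exactly the information that distinguishes them. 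So the collapse of the separator state to a number plus a reach-preferring greedy cannot be repaired by a cleverer exchange argument; the actual Dahlhaus--Karpinski algorithm must maintain richer information (or organize the computation differently, e.g.\ with a top-down phase or a global characterization), and that content is what is missing from your sketch.
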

We begin by giving the following result on finding exact $(k,2)$-colorings of block graphs.
\begin{lemma}
\probTwoEDkCOL is solvable in linear time on block graphs.
\end{lemma}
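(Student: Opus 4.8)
The plan is to exploit the fact that a block graph is chordal and hence has no induced cycle of length exceeding three. Since in any exact $(k,2)$-coloring every color class induces a $2$-regular subgraph, that is, a disjoint union of induced cycles, chordality forces each such cycle to be a triangle. Thus $\mathcal{R}_2(G)$ is exactly the set of partitions of $V(G)$ into triangles ($K_3$-factors), and by Lemma~\ref{lem:chid-perf} we have $\chi_2^=(G) = \min\{\chi(G/H) \mid H \text{ is a triangle partition of } G\}$, with $\chi_2^=(G) = \infty$ when no triangle partition exists. The existence of at least one triangle partition is precisely an instance of \probCompleteCover with $r = 3$, which by Theorem~\ref{thm:tr-cover} can be decided in linear time on the chordal graph $G$; if none exists we return NO.

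Next I would pin down $\chi(G/H)$ for an arbitrary triangle partition $H$. Every triangle of $H$ is $2$-connected and hence lies inside a single block, and two triangles are adjacent in $G/H$ exactly when some block meets both of them (any two vertices sharing a block are adjacent). Working in the block-cut tree $\mathcal{T}$ (whose nodes are the blocks and cut-vertices of $G$), I would assign to each triangle $T$ the subtree $\sigma(T)$ consisting of the block-node containing $T$, the cut-vertices of $T$, and the blocks reached through them. One then checks that $\sigma(T_1)\cap\sigma(T_2)\neq\emptyset$ iff $T_1$ and $T_2$ are adjacent in $G/H$: a shared cut-node would force a common vertex, which is impossible for disjoint triangles, so intersections occur only at shared block-nodes, which is precisely the adjacency condition. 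Hence $G/H$ is an intersection graph of subtrees of a tree and is therefore chordal, so $\chi(G/H)=\omega(G/H)$; the Helly property of subtrees then yields $\chi(G/H)=\max_B(\text{number of triangles meeting } B)$, the maximum taken over blocks $B$ of $G$.

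It remains to decide whether some triangle partition makes every block meet at most $k$ triangles. Writing $m_B=|B|$, letting $q_B$ be the number of cut-vertices of $B$ and $p_B=m_B-q_B$, and letting $S_B$ be the set of cut-vertices whose triangle is placed inside $B$, a short count gives that the number of triangles meeting $B$ equals $(m_B + 2q_B - 2|S_B|)/3$, subject to the divisibility constraint $|S_B|\equiv -p_B \pmod 3$ and to the global requirement that each cut-vertex lies in $S_B$ for exactly one incident block. This is a constraint-satisfaction problem on the tree $\mathcal{T}$. I would root $\mathcal{T}$ and run a bottom-up dynamic program whose state records, for the cut-vertex joining a subtree to its parent, whether that cut-vertex has already been consumed below, together with the residue needed to meet the local mod-$3$ and ``$\leq k$'' thresholds. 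Each block and cut-vertex is processed with a constant number of states, so the whole procedure runs in linear time and simultaneously certifies (non-)existence of a suitable partition.

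The main obstacle is this last step: the choices of where to place each cut-vertex interact across adjacent blocks, since absorbing a cut-vertex into one block lowers its load but raises a neighbor's, so $\min_H\chi(G/H)$ genuinely depends on $H$ and cannot be read off block-by-block (a small example with a central $K_3$ and pendant $K_6$'s already shows the load exceeding $\max_B m_B/3$). Proving that the tree dynamic program correctly optimizes this trade-off, reconciling the divisibility constraints that govern when any triangle partition exists with the per-block color bound, is the crux, whereas the reductions to chordality and to \probCompleteCover are comparatively routine.
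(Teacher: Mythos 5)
Your reduction to triangle partitions, the use of Theorem~\ref{thm:tr-cover} for existence, and the identification of $\chi(G/H)$ with the maximum number of triangles meeting a single block (via chordality of $G/H$ and the Helly property of subtrees) are all correct, and up to that point you follow essentially the same route as the paper, which finds one triangle cover with Dahlhaus--Karpinski, contracts it, and computes the chromatic number of the resulting block graph in linear time. The gap is in your final step: the tree dynamic program that is supposed to minimize the maximum block load over all triangle partitions is only sketched --- no states, transitions, or correctness argument are given --- and you yourself flag it as the unresolved crux. As written, the proposal therefore does not constitute a proof.

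Moreover, the premise that forces you into this optimization is mistaken. In a block graph the assignment of cut vertices to blocks is \emph{unique} across all triangle partitions: for a leaf block $B$ of the block-cut tree with parent cut vertex $c$, the constraint $p_B + |S_B| \equiv 0 \pmod 3$ together with $|S_B| \in \{0,1\}$ already decides whether $c \in S_B$ (and if $p_B \equiv 1 \pmod 3$ no partition exists at all); this decision propagates leaf-to-root, fixing $S_B$ for every block $B$. Hence the load profile, and with it $\chi(G/H)$, is the same for \emph{every} triangle partition $H$, so there is no trade-off to optimize and Lemma~\ref{lem:chid-perf}'s minimum is attained by any cover whatsoever. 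Your example with a central $K_3$ and pendant copies of $K_6$ shows only that the load can exceed $\max_B |B|/3$; it does not exhibit two partitions with different values of $\chi(G/H)$, and indeed none exists. This invariance is exactly what makes the paper's much simpler algorithm (take any cover returned by Theorem~\ref{thm:tr-cover}, contract it, compute the chromatic number of the resulting chordal graph, and compare with $k$) correct, and proving it --- rather than designing a dynamic program --- is the missing piece that would complete your argument.
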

\begin{proof}
By definition, each color class in an exact $(k,2)$-coloring induces a subgraph of vertex-disjoint cycles of length at least three.
In a block graph $G$, all vertices of any cycle are contained in a single block.
Further, any set $S$ of more than three vertices of a block induce a graph that is not a cycle because all vertices of $S$ are adjacent by the structure of $G$.
It follows that in an exact $(k,2)$-coloring of a block graph $G$, each color class induces a subgraph of vertex-disjoint triangles.

To find an exact $(k,2)$-coloring of $G$, we note that each block graph is chordal and thus we apply Theorem~\ref{thm:tr-cover} to find a triangle cover $\mathcal{T}$ of $G$. If $\mathcal{T}$ does not exist, we conclude that no exact $(k,2)$-coloring exists because there is at least one vertex that cannot be placed into a color class. Otherwise, when $\mathcal{T}$ exists, we construct the graph $H=(V,E)$, where $V$ has a vertex for each triangle in $\mathcal{T}$ and an edge between two vertices whenever the corresponding triangles, say $T_1$ and $T_2$, are adjacent, i.e., there exists $x \in T_1$ and $y \in T_2$ such that $x$ is adjacent to $y$ in $G$.
Equivalently, $H$ can be described via contractions of $G$, so $H$ is also a block graph.
It is well-known that the chromatic number of a chordal graph, and thus a block graph, can be computed in linear time. 
So if $\chi(H) \leq k$, we output YES and otherwise answer NO.
Furthermore, if desired, a corresponding exact $(k,2)$-coloring for $G$ is obtained in a straightforward manner from $H$.
All steps of the algorithm can be executed in linear time so our claim follows.
\end{proof}
\noindent By observing, more generally, that each color class must induce a subgraph of vertex-disjoint cliques in an exact $(k,d)$-coloring of a block graph, an argument similar to the previous lemma gives us the following result.
\begin{theorem}
For every $(k,d) \in \mathbb{N}^+ \times \mathbb{N}^+$, the problem \probtEDkCOL is solvable in linear time on block graphs.
\end{theorem}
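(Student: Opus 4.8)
The plan is to generalise the preceding lemma by replacing triangles with $(d+1)$-cliques. The first step is a structural claim: in any exact $(k,d)$-coloring of a block graph $G$, every color class induces a disjoint union of copies of $K_{d+1}$. Since each color class induces a $d$-regular subgraph $F$, and since $F$, being an induced subgraph of the chordal graph $G$, is itself chordal, $F$ has a simplicial vertex $v$. The neighborhood of $v$ in $F$ is then a clique of size $\deg_F(v) = d$, so $v$ together with these neighbors spans a $K_{d+1}$; as each such neighbor already has $d$ neighbors inside this clique and $F$ is $d$-regular, this clique is an entire connected component, forcing every component of $F$ to be a $K_{d+1}$. Consequently $G$ admits an exact $(k,d)$-coloring only if $V(G)$ can be partitioned into copies of $K_{d+1}$, i.e., only if $G$ has a $K_{d+1}$-factor.

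Next I would describe the algorithm exactly as in the $(k,2)$ case. For $d \geq 2$ one has $d+1 \geq 3$, so a $K_{d+1}$-factor (a solution to \probCompleteCover with $r = d+1$) can be found in linear time by Theorem~\ref{thm:tr-cover}; for $d = 1$ the required object is a perfect matching, which is likewise found in linear time on block graphs since they are chordal. If no factor exists, the algorithm answers NO. Otherwise, contract each clique of the factor to a single vertex to obtain $H$; being a contraction of a block graph, $H$ is again a block graph, hence chordal, so $\chi(H)$ is computable in linear time. I would then show $\chi_d^=(G) \leq k$ if and only if $\chi(H) \leq k$: a proper $k$-coloring of $H$, lifted by making each clique monochromatic, gives every vertex exactly its $d$ clique-mates of its own color (its remaining neighbors lie in cliques adjacent in $H$, hence differently colored), and conversely any exact $(k,d)$-coloring must color each clique monochromatically and give adjacent cliques distinct colors, yielding a proper $k$-coloring of $H$. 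The algorithm outputs YES iff $\chi(H) \leq k$, and all steps run in linear time.

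The main obstacle is that a block graph may admit several $K_{d+1}$-factors, so I must justify that testing the single factor returned by the subroutine decides \probtEDkCOL correctly, i.e., that $\chi(H)$ does not depend on the chosen factor. I would resolve this by a leaf-stripping argument on the block-cut tree showing the factor is \emph{essentially} unique. Take a block $B$ that is a leaf of the block-cut tree, with unique cut vertex $c$; its $|B|-1$ non-cut vertices can only be covered by cliques inside $B$, while $c$ is either covered inside $B$ or in one of its other blocks. These two options require $(d+1) \mid |B|$ and $(d+1) \mid (|B|-1)$ respectively, and since $d+1 \geq 2$ at most one can hold, so the number of cliques inside $B$ and the block in which $c$ is covered are forced; removing the settled vertices and recursing determines the entire factor up to the choice of which non-cut vertices are grouped together. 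Finally I would observe that this residual freedom is irrelevant to $H$: non-cut vertices contribute no edges outside their block, so the adjacencies of $H$ — and therefore $\chi(H) = \omega(H)$, as $H$ is chordal — depend only on the forced data. This makes $\chi(H)$ well defined and the algorithm correct, and the generalisation from the $(k,2)$ case to arbitrary $(k,d)$ then goes through verbatim, completing the proof.
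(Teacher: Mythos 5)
Your overall route is the same as the paper's: show that in a block graph every color class of an exact $(k,d)$-coloring induces a disjoint union of copies of $K_{d+1}$, find a $K_{d+1}$-factor via Theorem~\ref{thm:tr-cover}, contract it to a graph $H$, and test $\chi(H)\leq k$ (the paper itself only says ``an argument similar to the previous lemma''). Your simplicial-vertex argument for the structural claim is clean, and you correctly flag two points the paper passes over in silence: the case $d=1$, where Theorem~\ref{thm:tr-cover} does not apply since it requires $r\geq 3$, and --- the crux --- that a block graph may admit many $K_{d+1}$-factors while the algorithm inspects only one, even though by Lemma~\ref{lem:chid-perf} the quantity $\chi_d^=(G)$ is a minimum over all of them. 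Identifying that second issue is genuinely valuable, because without some invariance statement the algorithm (both yours and the paper's) is not obviously correct.

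However, the step you use to close that gap is false as stated: the adjacencies of $H$ do \emph{not} depend only on the forced data. Take $d=1$ and let $G$ be a $K_4$ on $\{c_1,c_2,x,y\}$ with a triangle $\{c_1,z_1,z_2\}$ attached at $c_1$ and a triangle $\{c_2,w_1,w_2\}$ attached at $c_2$. Every perfect matching contains $z_1z_2$ and $w_1w_2$, so the forced data is identical for all factors ($c_1$ and $c_2$ are always covered inside the $K_4$); yet the matching containing $c_1c_2$ and $xy$ contracts to $H\cong K_{1,3}$, while the one containing $c_1x$ and $c_2y$ contracts to $H\cong P_4$ --- non-isomorphic graphs. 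What is invariant is not $H$ but $\omega(H)$, and hence $\chi(H)$ since $H$ is chordal. To prove it, note that two factor-cliques are adjacent in $H$ exactly when they intersect a common block of $G$; a Helly-type argument over the block-cut tree (each factor-clique meets a set of blocks forming a subtree, and a cut vertex lies in at most one factor-clique) then shows that any set of pairwise adjacent factor-cliques all meet one common block, so
\begin{equation*}
\omega(H)=\max_{B}\Bigl(\tfrac{n_B}{d+1}+m_B\Bigr),
\end{equation*}
where $n_B$ is the number of vertices of block $B$ covered by factor-cliques lying inside $B$ and $m_B$ is the number of cut vertices of $B$ covered elsewhere --- exactly the quantities your leaf-stripping argument shows are forced. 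With this replacement your proof is complete; note that the published proof never addresses the multiplicity of factors at all, so your instinct was right and only this last justification needs repair.
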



\bibliographystyle{abbrv}
\bibliography{bibliography}
\end{document}